\newcommand{\bfa}[1]{\mbox{\boldmath $ #1 $}}
\title{A Bayesian approach to parameter identification with
an application to Turing systems}
\author[1,\Envelope]{Eduard Campillo-Funollet}
\author[2]{Chandrasekhar Venkataraman}
\author[3,\Envelope]{Anotida Madzvamuse}
\affil[1]{\small University of
Sussex, School of Mathematical and Physical Sciences, Brighton, UK. {\it e.campillo-funollet@sussex.ac.uk}\newline}
\affil[2]{University of St Andrews, School of Mathematics and Statistics, UK. {\it cv28@st-andrews.ac.uk}\newline}
\affil[3]{University of
Sussex, School of Mathematical and Physical Sciences, Brighton, UK. {\it a.madzvamuse@sussex.ac.uk}\newline}
\affil[\Envelope]{\small{Corresponding author}}
\renewcommand{\d}{\mathrm{d}}
\newtheorem{theorem}{Theorem}[section]
\newtheorem{proposition}{Proposition}[section]
\begin{document}
\maketitle 

\begin{abstract}
 We present a Bayesian methodology for infinite as well as finite dimensional parameter identification for
partial differential equation models. The Bayesian framework provides a rigorous mathematical framework for incorporating prior knowledge on uncertainty in the observations and the parameters themselves, resulting in an
approximation of the full probability distribution for the parameters, given the
data. Although the numerical approximation of the full probability distribution
is computationally expensive, parallelised algorithms can make many practically
relevant problems computationally feasible. The probability distribution not
only provides estimates for the values of the parameters, but also provides
information about the inferability of parameters and the sensitivity of the
model. This information is crucial when a mathematical model is used to study
the outcome of real-world experiments. 

Keeping in mind the applicability of our approach to tackle  real-world practical problems with data from  experiments, in this initial proof of concept work, we apply this theoretical and computational framework to parameter identification for a
well studied semilinear reaction-diffusion system with {\it activator-depleted}
 reaction kinetics, posed on evolving and stationary domains.
\end{abstract}

{\bf Keywords:} Bayesian inverse problems, parameter identification, inverse problems, Markov chain - Monte Carlo, reaction-diffusion, Turing spaces.

\section{Introduction}
\subsection{Motivation and highlights}

The Bayesian approach for infinite dimensional parameter identification problems, that we elucidate in the present work, 
is a powerful tool that allows mathematically, rigorous qualitative and quantitative inference to be made on the parameters in a mathematical model from the results of experimental studies of the system that is being modelled. Some aspects of the approach that we describe in detail of the remainder of this work are as follows;
\begin{itemize}
\item One seeks to find the conditional probability distribution of the  parameters given the experimental data. For many problems in the life sciences, variation in parameters, across organisms
for example, is commonplace. In such a setting, more widespread approaches that seek to identify a single parameter value, i.e., the most likely value given the data, may be next to worthless, whilst identifying the conditional probability distribution is significantly more informative. One example that we illustrate below is the computation of credible regions, the analogue of confidence intervals, for parameters. Such regions arise naturally from the approach and are typically tighter than those obtained using a method which seeks only a single value.
\item Uncertainty associated with the measurements from experiments may be naturally incorporated in the framework. As experimental techniques and the associated confidence in them and understanding of them matures, such 
a framework becomes particularly attractive.
\item Recent advances in the development of fast and efficient parallel algorithms can be employed in order to reduce the large computation times associated with the approach. We illustrate this by describing a robust and efficient parallel algorithm for the sampling of the aforementioned probability distribution in the sequel.
\item It is possible to show the problem of parameter identification is well posed (in a mathematical sense) under very mild assumptions on the model and the experimental data. For example, in the present work we consider parameter identification for a pattern forming system of equations. Proving the problem is well posed requires little more effort than showing the pattern forming model itself is well posed. 
\end{itemize}
The upshot is that we feel such a framework is ideally suited to tackling challenging parameter identification problems in biology, such as those in cell biology,  where the natural variability in experimental measurements together with the prevalence of highly complex nonlinear mathematical models motivates an approach with the features described above.

\subsection{Overview on parameter identification}
In the study of a real world system, experiments provide measurements of
quantities from the system, whilst theoretical work provides models predicting
the behaviour of the real world system. Such models typically include
 independent variables---the parameters---inherent to the system under
study. 

Parameter identification is the problem of extracting information about the
parameters of a model from the actual result of some measurement 
\cite{tarantola05}. We consider mathematical models of the form of partial differential equations (PDEs) whose
variables are elements of a Banach space.  The
data will be a noisy measurement of the solution of the system of PDEs, whilst the parameters will be the coefficients
of the mathematical  model, i.e. the coefficients of the system of PDEs,
including maybe initial and boundary conditions which can be considered as
parameters as well. The data may also be a functional of the solution, such as an average. Parameter
identification problems are sometimes referred to as inverse problems, thus
considering as a direct problem solving the system of PDEs given the
parameters. Note that parameters may be functions in general. 
We remark that this approach applies directly to systems of ordinary
differential equations as well, making the theory of interest to a wide range
of fields. 
 
There are two main approaches to solve parameter identification problems. On
one hand, one can use techniques from optimal control to find the best possible
value for the parameter. Here, the best possible value is defined in terms of
the distance from the solution of the model to the given data. In a Banach
space setting, when the data and the solution are elements of a Banach space,
this corresponds to the problem of minimising the norm of the difference
between the solution and the data. In general, this problem is ill-posed. A
regularisation term must be added, enforcing the norm
of the parameter to be minimised too, and  effectively restricting the candidate
parameters. Ill-posedness may come from different sources, like non-uniqueness
or non-continuous dependence of the solution with respect to the data (see, for
instance, \cite{aster13,beck85}).

A second approach to parameter identification is to use Bayesian techniques
(see for instance \cite{stuart10,kaipio06}).
This approach consists of a set of techniques built around Bayes' theorem, which roughly
speaking allows one to compute the conditional probability of an event $A$ given
the probability of an event $B$, in terms of the reverse conditional probability, i.e. the
probability of $B$ given $A$. In the parameter identification framework, this
will correspond to computing the probability of the parameters given the data, in
terms of the probability of the data given the parameters. Note that the latter
is given by the mathematical model: when the parameters are given, the
mathematical model predicts the result of the experiment, i.e. the
data.

In both cases, numerical algorithms are necessary to approximate the final
result. In the optimal control framework, there is a range of minimisation
techniques that may be used, for instance Gauss-Newton optimisation or the
Levenberg-Marquardt algorithm (see \cite{adby13} for a general description of
these algorithms). These algorithms provide the best value for the
parameters, in the sense of minimising the distance from the data to the
solution of the mathematical model. Similarly, an optimal value can be obtained
from the Bayesian framework using maximum likelihood methods
\cite{muller04,timmer04,baker05}. Additional methods can be used to estimate
confidence intervals for the parameters. In the Bayesian framework, the problem will be to
estimate a full probability distribution. In this case, usually a Markov Chain
Monte Carlo algorithm is used to build a Markov chain with stationary
distribution equal to the target distribution (see
\cite{norris98,stuart10,toni09,battogtokh02,brown03}). This usually requires a lot of
evaluations of the observation functional---the mapping from parameters to
observations, i.e. from parameters to the solution of the mathematical model---, solving the system of PDEs for each
evaluation. 
Whilst this is computationally more expensive than a minimisation algorithm,
the amount of information obtained---the full probability distribution---is
larger compared to the single value that we get from the optimal control
approach. Note that in both cases a parallel algorithm can be used to speed-up the
computations.

In contrast with the value for the parameters given by the optimal control
approach, the Bayesian framework provides a probability distribution. From the
probability distribution one can extract information about the uncertainty of
the parameters, thus avoiding situations where despite obtaining a value for
the parameters, a more detailed study of the uncertainty shows that they are
not well-determined \cite{gutenkunst07}. In many practical situations, the exact value of the
parameters is not useful in drawing conclusions. For instance, in
\cite{ashyraliyev08} it is shown that correlations between parameters in a
model for \textit{Drosophila Melanogaster} gap gene impede the individual
identification of the parameters, but it is still possible to obtain
conclusions about the regulatory topology of the gene network. Furthermore, the
information provided by the probability can be used in model selection problems
\cite{vyshemirsky08}, where several mathematical models are available and the
goal is to find which one fits better to the experimental data.

There are other advantages inherent to the fact that a probability distribution
for the parameters is available. For example, correlations between different
parameters can be observed in the joint probability distribution. These
correlations can be invisible when only a value for the parameters is computed,
leading to an overestimation of the uncertainty \cite{sutton16}. Similarly, the
correlations between parameters can suggest a non-dimensionalised form of the
equations to eliminate non-relevant parameters from the problem. 

A drawback of the Bayesian approach is its computational cost. To obtain a good
approximation of the posterior probability distribution, we need to solve the
system of partial differential equations many times. The use of
parallelised algorithms makes many relevant problems feasible. The same
problems were not suitable for this approach a few years ago due to the lack of
computational power \cite{siam01}.

Parameter identification problems often arise in many areas of research such as biology, biomedicine, chemistry, material sciences, mathematics, statistics etc.
Typically, a mathematical model needs to be fitted with experimental data.
Only in a few cases is it possible to do a
analytical parameter identification, see for instance \cite{friedman92} . Note also that parameters are not always
directly measurable from experiments, or have a defined physical meaning.
Hence, parameter identification through the proposed approach might help guide
experimentalists in identifying regions of interest to which parameters belong.

The optimal control approach is widely used. In \cite{ashyraliyev09}, it is
applied to systems of ordinary differential equations modelling enzyme kinetics. In this
case, confidence regions may be computed using the second derivatives of the
cost functional that is being minimised. Since this only provides local
information around the optimal point, the confidence regions computed with this
method tend to be large. The optimal control approach has also been applied to
geometric evolution problems for cell motility, and to Turing systems
\cite{garvie10,croft15,blazakis15,portet15,yang15}.

Bayesian techniques are common for parameter fitting in statistics. The main
difference with the present work is the type of models. We use models given by
partial differential equations instead of statistical models. In particular,
our models are deterministic---with a uniquely determined solution for each
admissible parameter.  Statistical models, or stochastic models in general,
incorporate randomness in the model. The Bayesian approach to parameter
identification for models given by PDEs (see \cite{stuart10} and the references
therein) is used in different fields, for example inverse problems in
subsurface flow \cite{iglesias14}.

In the absence of a real experimental model, for illustrative purposes, we consider a well known system of partial differential equations of the form of reaction-diffusion type, also known as Turing systems \cite{murray11,murray13,turing52}. The motivation is that parameter space identification for some reaction-diffusion system with classical reaction kinetics can be obtained analytically thereby offering us a bench-marking example for our theoretical and computational framework. Turing systems are a family of reaction-diffusion systems introduced by Turing in \cite{turing52}.
The main feature of these models is that small perturbations of homogeneous
steady states may evolve to solutions with non-trivial patterns
\cite{murray11,murray13}. An example
of such  a model is the Schnakenberg system (also known as the {\it
activator-depleted} substrate model) \cite{gierer72,prigogine68,schnakenberg79}. This model has
been widely studied, both analytically and numerically \cite{garvie10,madzvamuse05}. 
Our first example is to identify a time-dependent growth rate function of the reaction-diffusion system on a one-dimensional growing domain. For this case, the rest of the model parameters are considered known and fixed. This example illustrates the infinite dimensional parameter identification example, rather than parameters. Our second example is that of parameter identification in a finite dimensional framework  we consider the reaction-diffusion system posed on a stationary two-dimensional domain. For both examples, we use synthetic patterns  -- computer generated -- based on previous works using a fixed set of
parameters.

\section{Methodology on parameter identification}
The main ingredients of a parameter identification problem are the data (the
measurement from the experiment), the parameters and the mathematical model.
The data and the parameters are modelled as elements of suitable Banach spaces.
To fix the notation, let $U$ and $Y$ be Banach spaces, let $y\in Y$ be the data,
and let $p \in U$ be the parameter. Note that in general $p$ is a vector of
parameters. When one fixes a value for the parameter $p$, one can solve
the equations of the model to find the solution associated with this parameter.
This process is known as the solution of the forward problem; in contrast, the
problem of parameter identification is known as the inverse problem. The
solution of the forward problem is formalised with a mapping $G:U\rightarrow
Y$, such that a parameter $p\in U$ is mapped to the solution of the model with
parameter $p$. Since $G$ maps
parameters to the measurement space, $G$ is called the observation operator.
$G$ is well-defined as long as for each possible parameter, there exists a
unique solution to the mathematical model, otherwise $G$ might not be defined
in $U$ or might be multivalued.  

The measurements or observations generated by experiments typically have some 
error or uncertainty associated with them.
Following the usual terminology, we refer to such errors as the noise of the
measurement. We consider here additive noise, i.e. an error added to the true
value of the measurement. Note that different sources of error can be taken
into account and these can be included as several independent additive errors or as extra parameters in the
mathematical model; different types of noise, e.g. multiplicative, can also be
incorporated in this way \cite{glimm03,orrell99,osullivan06,cotter09}.

The noise is naturally modelled in terms of a probability distribution. Let
$\mathbb{Q}_0$ be a probability distribution. 
Let $\eta$ be a realisation of $\mathbb{Q}_0$. With this notation at hand, our
measurement can be expressed as
\begin{equation}\label{eqn:meas_model}
y = G(p) + \eta.
\end{equation}

The Bayesian approach to parameter identification assumes implicitly that a
probability distribution models the state of knowledge about a quantity \cite{tarantola05}. If the quantity is known with infinite precision then the
probability distribution will be concentrated at one point---a Dirac delta
distribution. The less we know, the more spread the distribution will be.

Equation \eqref{eqn:meas_model} expresses the relation between the probability
distribution of the data $y$, the parameters $p$ and the noise $\eta$. In other
words, \eqref{eqn:meas_model} connects our knowledge about the different
ingredients of the parameter identification problem. Note that the mathematical
model, encoded in $G$, is assumed to be known; our interest is to find
information about the parameters of a given model, not to find the model
itself, although dummy parameters could be used for model selection.

From the modelling point of view, equation \eqref{eqn:meas_model} is
well-defined as long as the data $y$ lies in a Banach space. Some systems may
require a more general framework, for instance if the data lies in a manifold
or in a metric space, where the operation "+" may not be defined. The Banach
space setting allows us to prove rigorous results on the parameter identification
problem, and it is suitable for the numerical approximation of the solution
\cite{stuart10,dashti13,cotter13}. In terms of applicability, in some
situations the task of finding the noise distribution is not trivial
\cite{kaipio07,huttunen07}. 

The parameter identification problem can be now stated as follows: find the
probability distribution of the parameter $p$ given the data $y$. This
probability distribution is called the posterior, and is denoted by
$\mathbb{P}(p|y)$, or $\mathbb{P}^y(p)$.
All the information available about the parameter $p$ is encoded in the
posterior probability distribution. The goal now is to characterise this
distribution.

From \eqref{eqn:meas_model} we can compute directly the probability
distribution of the data $y$ given a parameter $p$. Since we know the noise
distribution $\mathbb{Q}_0$, given a parameter $p$, the distribution of 
$y$ is $\mathbb{Q}_0$ shifted by $G(p)$. In terms of the probability density
function (PDF) of $\mathbb{Q}_0$, denoted by $\rho$, the PDF of the distribution
of the data $y$, given a parameter $p$, is $\rho(y-G(p))$. In the
sequel it will be useful to use the so called log-likelihood of the parameter
$p$, given by \cite{stuart10,dashti13} 
\begin{equation}\label{eqn:logl}
\Phi(p;y) = -\log(\rho(y-G(p))).
\end{equation}

Bayes' theorem characterises a conditional probability distribution
$\mathbb{P}(p|y)$ in terms of the reverse conditional probability distribution
$\mathbb{P}(y|p)$ and the marginal distribution $\mathbb{P}(p)$. The latter can
be interpreted as the probability distribution of $p$, i.e. the state of
knowledge about the parameter, regardless of the data $y$. Bayes' theorem
gives
\begin{equation}\label{eqn:bayes}
\mathbb{P}(p|y) \propto \mathbb{P}(y|p) \mathbb{P}(p).
\end{equation}
The constant of proportionality can be computed explicitly but we do not need
it, because the algorithm that we will use to numerically approximate the
probability distribution only uses the ratio between probabilities; the
constant of proportionality is itself cancelled.
 
The probability distribution $\mathbb{P}(p)$ is called the prior distribution,
and encodes our prior knowledge about the parameter $p$. This knowledge comes from
any source that was available before the performance of the experiment that
provides the data for the parameter identification. If there is not a previous
knowledge about the parameter $p$, the absence of knowledge can also be
represented. These type of priors are sometimes called non-informative priors
\cite{kass96}. In a finite dimensional setting, a natural choice for a
non-informative prior is the uniform distribution.

We can now apply Bayes' theorem to \eqref{eqn:meas_model}. Let $\mu^y$ be
the probability density function of $\mathbb{P}(p|y)$. We have seen that the
PDF of $\mathbb{P}(y|p)$ is given by $\rho(y-G(p))$, where $\rho$ is the PDF of
the noise distribution $\mathbb{Q}_0$. Denote by $\mu_0$ the PDF of the prior
distribution. From Bayes' theorem we obtain \cite{stuart10}
\[
\mu^y(p) \propto \rho(y-G(p))\mu_0(p).
\]

Bayes' theorem can be generalised to an infinite dimensional setting. 
The formulation is more technical but analogous. With
the general version of the theorem, the Bayesian approach to parameter
identification can be applied to infinite dimensional problems. Let
$\mathbb{P}_0$ be the prior distribution. The following result generalises the
Bayes' theorem to infinite dimensional settings \cite{stuart10,dashti13}.

\begin{theorem}[Bayes' theorem]\label{prop:bayes}
Assume that the log-likelihood $\Phi: U \times Y \longrightarrow \mathbb{R}$ is
measurable with respect to the product measure $\mathbb{P}_0 \times
\mathbb{Q}_0$ and assume that 
\[
Z := \int_U \exp\left( -\Phi(p;y) \right) \mathbb{P}_0(\d p) > 0.
\]
Then, the conditional distribution $\mathbb{P}(p|y)$, denoted by
$\mathbb{P}^y(p)$, exists, and it is absolutely continuous with respect to the
prior distribution $\mathbb{P}_0$. Furthermore,
\[
\frac{\d \mathbb{P}^y}{\d \mathbb{P}_0} = \frac{1}{Z} \exp\left( - \Phi(p;y)
\right).
\]
\end{theorem}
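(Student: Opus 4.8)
The plan is to construct the posterior $\mathbb{P}^y$ by exhibiting its density with respect to the prior and verifying it is a bona fide probability measure that satisfies the defining property of a conditional distribution. First I would work on the product space $U\times Y$ equipped with the reference product measure $\nu_0(\d p,\d y):=\mathbb{P}_0(\d p)\otimes\mathbb{Q}_0(\d y)$. Using the measurement model \eqref{eqn:meas_model} and the definition \eqref{eqn:logl} of the log-likelihood, the joint distribution $\nu$ of $(p,y)$ — where $p\sim\mathbb{P}_0$ and $y=G(p)+\eta$ with $\eta\sim\mathbb{Q}_0$ independent — is absolutely continuous with respect to $\nu_0$ with density proportional to $\exp(-\Phi(p;y))$; this is essentially a change of variables in the $y$-coordinate shifting $\eta\mapsto y-G(p)$, using that $\rho$ is the density of $\mathbb{Q}_0$. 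The measurability hypothesis on $\Phi$ is exactly what is needed to make this joint density a jointly measurable function, so that Fubini applies in the next step.

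Next I would disintegrate $\nu$ over the $y$-marginal. The $y$-marginal of $\nu$ has density $Z(y):=\int_U \exp(-\Phi(p;y))\,\mathbb{P}_0(\d p)$ with respect to (the appropriate reference measure for) $Y$, and the hypothesis $Z>0$ guarantees that for the observed datum $y$ this normalising constant is strictly positive and finite (finiteness can be arranged by the structure of $\rho$, or assumed as part of the standing hypotheses; strict positivity is the stated assumption), so division by $Z$ is legitimate. Then the elementary identity for conditioning densities — for measures dominated by a product, the conditional density is the joint density divided by the marginal density — yields that the regular conditional distribution of $p$ given $y$ has Radon–Nikodym derivative with respect to $\mathbb{P}_0$ equal to $Z^{-1}\exp(-\Phi(p;y))$. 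One checks this is nonnegative and integrates to $1$ against $\mathbb{P}_0$ by the very definition of $Z$, so it defines a genuine probability measure $\mathbb{P}^y$ absolutely continuous with respect to $\mathbb{P}_0$.

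Finally I would verify that the measure so constructed genuinely is \emph{the} conditional distribution, i.e. that for every bounded measurable $f$ on $U$ and $g$ on $Y$ one has $\int_{U\times Y} f(p)g(y)\,\nu(\d p,\d y)=\int_Y g(y)\big(\int_U f(p)\,\mathbb{P}^y(\d p)\big)\,\mathbb{P}(\d y)$, which is a direct Fubini computation once the joint density is in hand. This pins down $\mathbb{P}^y$ uniquely (up to $\mathbb{P}(\d y)$-null sets of $y$) and completes the proof.

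The main obstacle is the measure-theoretic bookkeeping on the infinite-dimensional space $Y$: one cannot literally speak of a ``density $\rho$ of $\mathbb{Q}_0$'' against Lebesgue measure when $Y$ is infinite-dimensional, so the shift argument $\eta\mapsto y-G(p)$ and the resulting form of the joint density must be justified via the Cameron–Martin–type quasi-invariance of $\mathbb{Q}_0$ (or simply taken as the definition of $\Phi$ via the Onsager–Machlup functional), and the existence of regular conditional probabilities requires $U$ to be, say, a Polish space. Handling these points carefully — rather than the final Radon–Nikodym computation, which is routine — is where the real work lies; in the finite-dimensional case all of this collapses to the elementary statement $\mu^y(p)\propto\rho(y-G(p))\mu_0(p)$ already displayed above.
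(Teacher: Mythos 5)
Your proof is correct and follows essentially the same route as the argument the paper relies on: the paper's own ``proof'' is just a citation to Theorem~3.4 of \cite{dashti13}, whose proof is exactly your construction --- form the joint measure on $U\times Y$ with density proportional to $\exp(-\Phi(p;y))$ against the product reference measure $\mathbb{P}_0\otimes\mathbb{Q}_0$, disintegrate over the $y$-marginal, and normalise by $Z$. You also correctly identify where the real technical work lies (justifying the joint density via quasi-invariance of $\mathbb{Q}_0$ in infinite dimensions, and the existence of regular conditional probabilities on suitably nice spaces), which is precisely what the cited reference handles.
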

\begin{proof}  See \cite[theorem 3.4]{dashti13}. 
\end{proof}

Using only properties of the forward problem, the following proposition (see
\cite{stuart10}) guarantees that the parameter identification problem is
well-posed: the posterior distribution exists, and it depends continuously on
the data $y$. Denote by $B(0,r)\subset U$ the ball with centre at the origin and
radius $r$, and by $\|\cdot\|_\Sigma$ the Euclidean norm with weight
$\Sigma$. 

\begin{proposition}[Well-posedness of the Bayesian inverse problem] \label{prop:inv_wellposed}
Assume that $\mathbb{Q}_0$ is a Gaussian distribution with covariance $\Sigma$,
and that the log-likelihood is $\Phi(p;y) = \|y - G(p)\|_\Sigma^2$. Assume that
the observation operator $G$ satisfies the following conditions.
\begin{enumerate}[i)]
\item For all $\varepsilon$, there exists $M\in \mathbb{R}$ such that for all
$p\in U$, 
\[
\|G(p)\|_\Sigma \leq \exp\left( \varepsilon \|p\|_U^2 + M \right).
\]

\item For all $r>0$, there exists a $K$ such that for all $p_1,p_2\in B(0,r)$,
\[
\|G(p_1) - G(p_2)\|_\Sigma \leq K\|p_1 - p_2\|_U.
\]
\end{enumerate}

Then, the distribution $\mathbb{P}^y(p)$ defined in Theorem \ref{prop:bayes}
exists, and depends continuously on the data $y$.
\end{proposition}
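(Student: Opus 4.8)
The plan is to follow the classical well-posedness argument for Bayesian inverse problems (as in \cite{stuart10,dashti13}): first establish existence of $\mathbb{P}^y$ by checking the hypotheses of Theorem~\ref{prop:bayes}, then establish Lipschitz dependence of the map $y\mapsto\mathbb{P}^y$ in the Hellinger metric by direct estimates on the Radon--Nikodym density $Z^{-1}\exp(-\Phi(p;y))$.

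First I would verify existence. Since here $\Phi(p;y)=\|y-G(p)\|_\Sigma^2\ge 0$, the integrand satisfies $\exp(-\Phi(p;y))\le 1$, so $Z\le 1<\infty$ for every $y$; joint measurability of $\Phi$ on $U\times Y$ follows from continuity of $G$ on bounded sets (condition (ii)) together with continuity of the squared weighted norm. For the lower bound $Z>0$: fix $r$ with $\mathbb{P}_0(B(0,r))>0$; by condition (i) (with any fixed $\varepsilon$) we have $\|G(p)\|_\Sigma\le\exp(\varepsilon r^2+M)$ on $B(0,r)$, hence $\Phi(p;y)\le(\|y\|_\Sigma+\exp(\varepsilon r^2+M))^2=:C(r,y)<\infty$ there, so $Z\ge\mathbb{P}_0(B(0,r))\,e^{-C(r,y)}>0$. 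Theorem~\ref{prop:bayes} then yields the existence of $\mathbb{P}^y$ and its density.

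Next, continuous dependence on the data. Fix $R>0$ and take $y,y'$ with $\|y\|_\Sigma,\|y'\|_\Sigma\le R$, and write $Z=Z(y)$, $Z'=Z(y')$. The key pointwise estimate is the local-Lipschitz-in-data bound for $\Phi$: applying $|\,\|a\|_\Sigma^2-\|b\|_\Sigma^2\,|\le\|a-b\|_\Sigma(\|a\|_\Sigma+\|b\|_\Sigma)$ with $a=y-G(p)$, $b=y'-G(p)$ gives $|\Phi(p;y)-\Phi(p;y')|\le\|y-y'\|_\Sigma\,\bigl(2\|G(p)\|_\Sigma+2R\bigr)$. Combining this with the elementary inequality $|e^{-s}-e^{-t}|\le|s-t|$ for $s,t\ge 0$ bounds the difference of the (square-root) densities; I would also integrate $|\Phi(p;y)-\Phi(p;y')|$ against $\mathbb{P}_0$ to control $|Z-Z'|$. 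Then I assemble
\[
d_{\mathrm{Hell}}(\mathbb{P}^y,\mathbb{P}^{y'})^2=\tfrac12\int_U\Bigl(Z^{-1/2}e^{-\Phi(p;y)/2}-(Z')^{-1/2}e^{-\Phi(p;y')/2}\Bigr)^2\mathbb{P}_0(\d p)
\]
by splitting, via the triangle inequality, into a term controlled by the density difference above and a term involving $|Z^{-1/2}-(Z')^{-1/2}|$; the latter is handled because, exactly as in the existence step, $Z,Z'\ge c(R,r)>0$ uniformly over $\|y\|_\Sigma,\|y'\|_\Sigma\le R$, whence $|Z^{-1/2}-(Z')^{-1/2}|\le\tfrac12 c(R,r)^{-3/2}|Z-Z'|$. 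This produces $d_{\mathrm{Hell}}(\mathbb{P}^y,\mathbb{P}^{y'})\le C(R)\,\|y-y'\|_\Sigma$, and continuity of $y\mapsto\mathbb{P}^y$ (also in total variation, which is dominated by Hellinger) follows.

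The main obstacle is the integrability bookkeeping underpinning every step: one needs $\int_U\|G(p)\|_\Sigma^2\,\mathbb{P}_0(\d p)<\infty$ and $\int_U\|G(p)\|_\Sigma\,e^{-\Phi(p;y)}\mathbb{P}_0(\d p)<\infty$, which by condition (i) reduce to $\int_U\exp(2\varepsilon\|p\|_U^2)\,\mathbb{P}_0(\d p)<\infty$ for $\varepsilon$ sufficiently small. This is exactly where one invokes that the prior has Gaussian-type tails (Fernique's theorem when $\mathbb{P}_0$ is Gaussian), choosing the growth exponent $\varepsilon$ in (i) small relative to the Fernique exponent of $\mathbb{P}_0$; keeping all constants uniform over $y$ in bounded balls is the only genuinely delicate point, the remaining inequalities being the elementary ones recalled above. (Condition (ii) is used only to secure continuity/measurability of $\Phi$ and could be relaxed to mere continuity of $G$ on bounded sets.)
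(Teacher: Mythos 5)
Your argument is correct and is essentially the proof the paper relies on: the paper gives no argument of its own but simply cites Theorem 4.2 of \cite{stuart10}, and your reconstruction (existence via the lower bound on $Z$ over a prior-charged ball, then Hellinger--Lipschitz continuity in $y$ using the pointwise bound on $|\Phi(p;y)-\Phi(p;y')|$ and Fernique's theorem for the integrability) is precisely the argument of that cited theorem. Your closing remark is also well taken: the integrability step does require Gaussian-type tails of the prior, a hypothesis present in the cited source but left implicit in the proposition as stated here.
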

\begin{proof}  See \cite[theorem 4.2]{stuart10}.
\end{proof}

Note that the first condition is a bound on the growth of the solution with
respect to the parameter $p$, whilst the second condition is the Lipschitz
continuity of the observation operator. The first condition is not necessary to
obtain well-posedness if the parameter space $U$ is finite dimensional.

We have characterised the posterior distribution -- the probability distribution
of the parameter $p$ given the data $y$ -- in terms of the data, the model, the
noise distribution and the prior distribution. Using this characterisation of
the posterior, we shall use numerical tools to obtain useful information about
the parameter. Note that in many cases it is crucial to know if the parameter is
well-determined by the data \cite{gutenkunst07}, and often the precise value of
it is not required to draw a conclusion \cite{ashyraliyev08}. For instance, we
can compute credible regions, which are regions of the parameter space with a
given probability. Credible regions are the analogue of confidence intervals in
the Bayesian framework.

The direct or deterministic approach, connected to the optimal control methods (see for instance
\cite{croft15}), is to find the minimum of the log-likelihood
function $\Phi$ given in \eqref{eqn:logl}. This approach is known as the maximum likelihood
method (due to the minus sign in the definition of the log-likelihood, a
minimum of the log-likelihood is a maximum of the likelihood). The
log-likelihood can be interpreted as a distance between the data and the
solution of the mathematical model, so a natural approach is to minimise this
distance. For example, if the noise distribution $\mathbb{Q}_0$ is Gaussian
with mean $0$ and covariance $\Sigma$, we have $\Phi(p;y) = \|y -
G(p)\|_\Sigma^2$, i.e. the log-likelihood is the squared Euclidean distance
with weight $\Sigma$.  

The maximum likelihood method gives a value for the parameter, but does not
provide any information about the uncertainty; the maximum likelihood is only a
fraction of the information encoded in the posterior. Furthermore, a numerical
algorithm could find a local maximum, so global optimisation algorithms may be
necessary \cite{ashyraliyev09}.

In order to use more information from the posterior, we have to approximate
numerically the full posterior distribution. As long as we are able to obtain a
good approximation, we can then extract any information encoded in the
posterior from the approximation. To numerically approximate a probability
distribution we shall generate a finite set of points distributed as samples from
the target distribution.

A useful mathematical structure to approximate a probability distribution is a
Markov chain. A Markov chain is a sequence such that the probability of the
next value, termed the next state, depends only on the present value. Markov
chain Monte Carlo methods (MCMC) are a family of methods that produce a Markov
chain with a given distribution \cite{norris98}. For the problem at hand, the target
distribution is the posterior.

Metropolis--Hastings methods \cite{metropolis53,hastings70} are a special case
of the MCMC methods defined by a proposal kernel
$k(x,y)$, the probability to propose a state $y$ if the present state is
$x$, and an acceptance probability $a(x,y)$, the probability of accepting
the proposed state $y$ if the present state is $x$. To generate a new state
of the chain, one proposes a new state $y$ sampling from the proposal kernel
$k(x,y)$, and then one accepts the proposal with probability $a(x,y)$. If the
proposal $y$ is accepted, it becomes the next state. Otherwise, we stay in the
present state, i.e. the next state is $x$. See \cite{berg08} for a general
introduction to MCMC methods. A simple choice for the proposal kernel is to sample from the prior
distribution, i.e. $k(x,y) = \mu_0(y)$. More elaborate choices are possible
(see for instance \cite{cotter13}). For symmetric proposal kernels, the
acceptance probability is given by $a(x,y) = \min(\exp(\Phi(x)-\Phi(y)),1)$.
Algorithm \ref{alg:classic-mh} describes one step of the Metropolis-Hastings
algorithm. 

\begin{algorithm}
\begin{algorithmic}
\State 1. Draw a proposal  $\bar{x}$ from the proposal kernel
$k(x_k,\bar{x})$.
\State 2. Evaluate $a(x_k,\bar{x}) = \min(\exp(\Phi(x_k)-\Phi(\bar{x})),1)$.
\State 3. Accept $\bar{x}$ with probability $a(x_k,\bar{x})$, i.e
$x_{k+1}=\bar{x}$ with probability $a(x_k,\bar{x})$, and $x_{k+1}=x_k$
otherwise.
\end{algorithmic}
\caption{One step of the standard Metropolis--Hastings algorithm. This will
generate one new sample.}
\label{alg:classic-mh}
\end{algorithm}

MCMC methods are robust but slow. The distribution of the Markov chain
converges, under general conditions, to the target distribution, but long
chains are necessary to obtain good approximations \cite{norris98}. Also, the
methods are inherently sequential. Furthermore, in the parameter identification
framework, evaluation of the acceptance probability typically involves, at
least, one evaluation of the log-likelihood, and in consequence, one evaluation
of the observation operator $G$. This is an expensive operation in terms of
computing time in cases where the model is a system of PDEs because it entails solving the system. Similarly, methods that require quantities like $\nabla \Phi$ may not be
feasible.

To overcome these difficulties we shall use a parallel Metropolis--Hastings
algorithm \cite{calderhead14,tjelmeland04}. The key idea is to generate an
\textit{N-dimensional} Markov chain, such that its distribution is $N$ copies
of the target distribution. This can be done in a way that allows parallel
evaluations of the log-likelihood. 

The parallel Metropolis--Hastings algorithm goes as follows. Let $k(x,y)$ be a
proposal kernel---the probability to propose a state $y$ if the present state
is $x$.  For ease of presentation assume that the proposal kernel is
symmetric, i.e. $k(x,y)=k(y,x)$. Define $r(x,y) = \exp(\Phi(x)-\Phi(y))$. Let
$x_i$ be the present state. 

The algorithm is applied as follows. Given a present state $x_k$, we generate
$N$ new proposals $\{\bar{x}^j\}_{j=1}^N$ from the proposal kernel
$k(x_k,\bar{x}^j)$. Take $\bar{x}^0 = x_k$. Then, in parallel, we evaluate the log-likelihoods
$\Phi(\bar{x}^j)$, $j=1\ldots N$. Note that this will be done by $N$ instances
of the PDE solver running in parallel, but the solver itself need not to be
parallel. With the values of $\Phi(\bar{x}^j)$ at hand, the acceptance
probability of each proposal is computed by finding the stationary distribution
of a Markov chain with $N+1$ states, given by the transition matrix
\[ A(i,j) = \left\{ \begin{aligned} \frac{1}{N}\min(1,r(\bar{x}^i,\bar{x}^j))\quad & \textrm{ if }
j\neq i, \\ 1 - \sum_{j\neq i} A(i,j) \quad & \textrm{ if } j=i. \end{aligned}
\right.  \]
Finally, we sample $N$ times from the stationary distribution to produce $N$
new states (see Algorithm \ref{alg:pmh}). Note this approach is nonintrusive and does not require modifications of the solver for the PDEs.

\begin{algorithm}
\begin{algorithmic}
\State 1. Draw $N$ new points $\{\bar{x}^j\}_{j=1}^N$ from the proposal kernel
$k(x_k,\bar{x}^j)$. Take $\bar{x}^0 = x_k$.
\State 2. Evaluate, in parallel, $\Phi(\bar{x}^j)$, $j=1\ldots N$.
\State 3. Compute the acceptance probabilities of the proposals, given by the
stationary distribution of a Markov chain with transition matrix
\[ A(i,j) = \left\{ \begin{aligned} \frac{1}{N}\min(1,r(\bar{x}^i,\bar{x}^j))\quad & \textrm{ if }
j\neq i, \\ 1 - \sum_{j\neq i} A(i,j) \quad & \textrm{ if } j=i. \end{aligned}
\right.  \]
\State 4. Sample $N$ times from the stationary distribution to produce
$x_{k+1},\ldots,x_{k+N}$.
\end{algorithmic}
\caption{One step of the parallel Metropolis--Hastings algorithm. This will
generate $N$ new samples.}
\label{alg:pmh}
\end{algorithm}

The method described above has been implemented in Python, using the module
Scipy \cite{jones01} and the module \textit{multiprocessing} from the Python Standard
Library. The evaluation of the log-likelihood can be implemented using a different
language and then invoked from the Python code. In this paper, we use Python
also for the evaluation of the log-likelihood; see Section
\ref{sec:application} for details.

Our primary long term motivation is to work with experimental data, for this case one can use the proposed methodology as
follows. Using the data, implement a routine to evaluate the log-likelihood.
Since we are going to explore a range of possible parameters, the evaluation of
the log-likelihood must be robust. Apply the parallel Metropolis--Hastings to
generate several long Markov-Chains, for instance 10 chains of length $10^6$.
Check the convergence of the chains, by checking the convergence of the means,
or using other methods (see for instance \cite{raftery95}). Discard a fraction
at the beginning of each chain, where the effects of the choice of the initial
state were still evident. Finally, recombine the chains in a large set, that approximates
our target distribution. Note that using the parallel Metropolis-Hastings
algorithm, we are able to generate longer chains than using the standard
Metropolis-Hastings algorithm. 

Following this approach we obtain a numerical approximation of the posterior
distribution given by the application of Bayes' theorem to the parameter
identification problem. Note that the method is non-intrusive: the solver for
the model is independent of the parameter identification algorithms. Hence, it
can be applied to a wide range of problems in particular those for which an
existing fast and possibly complex solver for the forward problem is available.
As a proof of concept of our approach, we will consider a well studied
mathematical model of the form of reaction-diffusion systems to perform the
parameter identification. We are motivated by the fact that a precise
definition of the parameter space  in which, for example, patterns can form for
some specific reaction-kinetics (e.g. {\it activator-depleted} type) is known analytically and therefore, we have a clean example to validate the accuracy of our methodology. We present a concrete example in the next section. 
 
 \subsection{HPC computations}
For each of the examples shown in Section \ref{sec:application},  we generate 10 Markov chains for a total of
approximately $10^6$ samples. The mean and the correlation of each chain are
examined, and used to decide the burn-in---the fraction of the chain discarded
due to the influence of the initial value. The burn-in fraction is determined
by checking the convergence in mean for each chain. Finally, the chains are combined
in a big set, that we use to generate the plots.

All the results shown are generated using the local
HPC cluster provided and managed by the University of Sussex. This HPC
cluster consists of 3000 computational units. The models of the
computational units are AMD64, x86\_64 or 64 bit architecture, made up of
a mixture of Intel and AMD nodes varying from 8 cores up to 64 cores per
node. Each unit is associated with 2GB memory space. Most of the
simulations in this paper are executed using 8-48 units.

The wall-clock computation time for one chain ranged from 1 to 4 days, CPU time
ranged from 10 to 35 days for one chain. It must be noted that
there are two levels of parallelisation: first the algorithm is parallelised
and uses the 8 available cores, and then many instances of the algorithm run at
the same time to produce independent chains. We remark that we produce independent
chains in order to test the convergence of the algorithm.

\section{Applications and results}\label{sec:application}
To illustrate the methodology proposed in this study, we consider
infinite (as well as finite) dimensional parameter identification for a well-known mathematical model, a
reaction-diffusion system on evolving and stationary domains. As mentioned earlier, we take the Schnakenberg kinetics \cite{gierer72,prigogine68,schnakenberg79} for illustrative purposes. We use our Bayesian parameter identification approach to find the growth function associated with the reaction-diffusion system as outlined next.

\subsection{Reaction-diffusion system posed on uniform isotropic evolving domains}
Let $\Omega_t  \subset {\mathbb{R}}^m$ $(m=1,2)$   be a simply connected bounded evolving domain for all time $t \in I=[0,t_F]$, $t_F > 0$ and ${\partial \Omega_t}$ be the evolving boundary enclosing $\Omega_t$. Also let $ {\bfa  u} = \left( u \left( {\bfa x} (t),t \right), v \left( {\bfa x} (t),t \right) \right)^T$ be a vector of two chemical concentrations at position ${\bfa x} (t) \in \Omega_t \subset  {\mathbb{R}}^m$. The growth of the domain $\Omega_t$ generates a flow of velocity ${\bfa v}$. For simplicity, let us assume a uniform isotropic growth of the domain defined by
$ {\bfa x} (t) = \rho (t) {\bfa x} (0)$
where, ${\bfa x}(0)  \in \Omega_0$ is the initial domain and $\rho (t)\in
C^1(0,1)$ is the growth function (typically exponential, linear or logistic). We further assume that the flow velocity is identical to the domain mesh velocity
$ {\bfa v} : = \frac{d {\bfa x}}{dt}.$
The evolution equations for reaction-diffusion systems in the absence of
cross-diffusion can be obtained from the application of the law of mass
conservation in an elemental volume using Reynolds transport theorem. Since the
domain evolution is known explicitly, a Lagrangian mapping from an evolving to
a stationary reference initial domain yields the following non-dimensional
reaction-diffusion system with time-dependent coefficients
\cite{crampin02,mackenzie11,madzvamuse07,madzvamuse10,madzvamuse16}
\begin{equation} 
 \begin{cases}
 \begin{cases}
  u_t +\frac{m \dot{\rho} (t)}{\rho (t)} u= \frac{1}{\rho^2 (t)} \nabla^2 u  +  \gamma f(u,v), \\  % ,\\ 
  \\
  v_t + \frac{m \dot{\rho} (t)}{\rho (t)} v= \frac{d}{\rho^2 (t)} \nabla^2 v + \gamma g(u,v), %,
  \end{cases}
  \quad {\bfa x} \in \Omega_0, \\
  \\
{\bfa n} \cdot \nabla u= {\bfa n} \cdot \nabla v =0,\; {\bfa x}\,\text{on}\,\partial\Omega_0,  \\ \\
u({\bfa x},0)= u_0({\bfa x}),\; \text{and} \; v({\bfa x},0)=v_0({\bfa x}),\; {\bfa x}\,\text{on}\;\Omega_0, 
 \end{cases}
 \label{eqn:schnakenberg}
\end{equation}
where $\nabla^2$ is the Laplace operator on domains and volumes, $d$ is the ratio of the diffusion coefficients and $\dot{\rho}:=\frac{d \rho}{dt}$. Here, {\bfa n} is the unit outward normal to $\Omega_t$. Initial conditions are prescribed through non-negative bounded functions $u_0 ({\bfa x})$ and $w_0 ({\bfa x})$.  In the above, $f(u,v)$ and $g(u,v)$ represent nonlinear reactions and these are given by the  {\it activator-depleted}  kinetics 
 \cite{crampin99,gierer72,prigogine68,schnakenberg79}
\begin{equation}\label{eqn:reactions}
f(u,v)  = a - u + u^2 v, \quad \text{and} \quad 
g(u,v)  = b - u^2 v.
\end{equation}
To proceed, let us fix the parameters $a$, $b$, $d$ and $\gamma$, and use the Bayesian approach to identify the domain growth rate function $\rho (t)$, assuming $\rho(0) = 1$,  $\rho(t)> 0$ for $t>0$ and at the final fixed domain size $\rho(T)$ is known. 

Well posedness results for the system of equations \eqref{eqn:schnakenberg}, as
well as stability results, can
be found in \cite{venkataraman12}. Furthermore, the positivity of solutions is stablished. These results
ensure that the conditions in Proposition \ref{prop:inv_wellposed} are
satisfied, thus we can conclude well posedness of the parameter identification
problem. 

\begin{proposition}[Well-posedness of the parameter identification problem]\label{prop:sch_wellposed}
Let $G$ be the observation operator associated with the reaction-diffusion system
\eqref{eqn:schnakenberg} with reaction-kinetics \eqref{eqn:reactions}. Assume
that $a$, $b$, $d$, $\gamma > 0$, and $\rho\in C^1(0,T)$, $\rho>0$. Then, the conditions in Proposition
\ref{prop:inv_wellposed} are satisfied, and thus the parameter identification
problem is well-posed.  
\end{proposition}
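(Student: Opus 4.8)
The plan is to verify that the observation operator $G$ for \eqref{eqn:schnakenberg}--\eqref{eqn:reactions}, regarded as the map sending a growth function $\rho$ to the observed part of the associated solution $(u,v)$, satisfies hypotheses (i) and (ii) of Proposition \ref{prop:inv_wellposed}, and then simply invoke that proposition. First I would pin down the functional-analytic setting: the admissible parameters form $U = \{\rho\in C^1(0,T) : \rho(0)=1,\ \rho(T)\ \text{prescribed},\ \rho>0\}$, so that each $\rho\in U$ is bounded above and below by positive constants on the compact interval $[0,T]$ and the coefficients $m\dot\rho/\rho$, $1/\rho^2$, $d/\rho^2$ in \eqref{eqn:schnakenberg} are well defined and continuous in $t$; the data space $Y$ is the (finite- or infinite-dimensional) space in which the observations of $(u,v)$ are taken. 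With this in place, the well-definedness and single-valuedness of $G$ on $U$ is exactly the well-posedness established in \cite{venkataraman12}.

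For hypothesis (i) I would appeal to the a priori bounds of \cite{venkataraman12}: the positivity of solutions together with the $L^\infty$ / energy bounds proved there yield $\|G(\rho)\|_\Sigma \le C$ with $C$ depending only on the fixed data $a,b,d,\gamma$, the initial conditions $u_0,v_0$ and $T$ — in particular not on $\varepsilon$ and, since the bound arises from an invariant-region / maximum-principle argument insensitive to the precise values of the coefficients, not on $\rho$ itself. A uniformly bounded quantity trivially satisfies $\|G(\rho)\|_\Sigma \le \exp\!\big(\varepsilon\|\rho\|_U^2 + M\big)$ with $M=\log C$ for every $\varepsilon\ge 0$, so (i) holds; I would also note that, by the remark following Proposition \ref{prop:inv_wellposed}, this hypothesis is not needed at all in the finite-dimensional instance.

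The substantive step is hypothesis (ii), the local Lipschitz continuity of $G$, which is precisely where the stability results of \cite{venkataraman12} enter. Given $\rho_1,\rho_2\in B(0,r)\cap U$ with solutions $(u_1,v_1)$, $(u_2,v_2)$, I would write the system satisfied by the differences $w=u_1-u_2$, $z=v_1-v_2$; its right-hand side splits into (a) coefficient-difference terms of the form $\big(\tfrac{1}{\rho_1^2}-\tfrac{1}{\rho_2^2}\big)\nabla^2 u_1$, $\big(\tfrac{m\dot\rho_1}{\rho_1}-\tfrac{m\dot\rho_2}{\rho_2}\big)u_1$ and their $v$-analogues, which — using that $\rho\mapsto 1/\rho^2$ and $(\rho,\dot\rho)\mapsto\dot\rho/\rho$ are Lipschitz on the admissible set with constants controlled by $r$ (and the lower bound on $\rho$) — are bounded in the appropriate norm by a multiple of $\|\rho_1-\rho_2\|_{C^1}$ times norms of $u_1,v_1$ that are controlled by the regularity estimates of \cite{venkataraman12}; and (b) reaction-difference terms $f(u_1,v_1)-f(u_2,v_2)$, $g(u_1,v_1)-g(u_2,v_2)$, bounded by $L(\|w\|+\|z\|)$ because $f,g$ in \eqref{eqn:reactions} are locally Lipschitz and all four solution components lie in a fixed bounded region by the positivity/boundedness bounds. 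Testing the difference system and applying Gronwall's inequality then closes the estimate and gives $\|G(\rho_1)-G(\rho_2)\|_\Sigma \le K\|\rho_1-\rho_2\|_U$ with $K=K(r)$. Together with (i), Proposition \ref{prop:inv_wellposed} yields existence of $\mathbb{P}^y$ and continuous dependence on $y$, which is the assertion.

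I expect the main obstacle to be making the constant in (ii) genuinely uniform over the ball $B(0,r)$: this hinges on a positive lower bound for $\rho$, needed so that $1/\rho^2$ and $\dot\rho/\rho$ are Lipschitz with controlled constant — automatic for each individual admissible $\rho$ by compactness of $[0,T]$, but which must be built into $U$ (or the ball interpreted within $U$) to be uniform — and on choosing the norm on the solution side so that the stability estimate of \cite{venkataraman12} supplies exactly the norms of $u_1,v_1$ that surface when bounding the coefficient-difference terms.
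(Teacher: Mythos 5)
Your proposal is correct and follows essentially the same route as the paper: the paper's proof is a one-line appeal to the well-posedness and \emph{a priori} estimates of the cited analysis of system \eqref{eqn:schnakenberg}, with condition (i) of Proposition \ref{prop:inv_wellposed} attributed to those estimates, which is exactly your strategy. You simply supply more detail than the paper does (the difference-system/Gronwall sketch for condition (ii) and the observation that a uniform lower bound on $\rho$ must be built into $U$), and these additions are sound rather than divergent.
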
 
\begin{proof} The result follows from the analysis in \cite{venkataraman12},
where well-posedness for the problem \eqref{eqn:schnakenberg} is shown.
In particular, the first condition in Proposition \ref{prop:inv_wellposed} is a
consequence of the \it{a priori} estimates for the solution. 
\end{proof}

Although this example is synthetic, it is biologically plausible. First, we generate
the data by adding noise to a numerical solution of the system, but the
parameter identification problem would be analogous in the case of experimental
data (see for instance \cite{vigil92} for an experiment that produces Turing
patterns similar to the patterns that we use as data in Example 2). 

\subsubsection{Example 1: Infinite dimensional parameter identification} 
Without any loss of generality, we restrict our first example to the one-dimensional case ($m=1$) where we fix model parameters with standard values in the literature as \cite{murray13}
\[a =0.1, \; b= 0.9, \; d= 10, \; \gamma = 1000.\]
We want to identify the growth function $\rho (t)$ given synthetic data generated from two different growth profiles defined by the exponential and logistic functions 
\begin{align}\label{eqn:grates}
\rho_{exp} (t)  = \exp(0.001t),\quad \text{and} \quad 
\rho_{log} (t)  = \frac{\exp(0.01t)}{1+\frac{\exp(0.01t)-1}{\exp(0.006)}}.
\end{align}
We fix the final time $T=600$. Note that the size of the domain at time $T$ is the same in both cases.

Initial conditions are taken as small fixed random perturbations of the homogeneous
steady state, $(u_*,v_*)=\left(a+b, \frac{b}{(a+b)^2} \right)$ 
\begin{align}\label{eqn:ic_gturing}
u_0(x)  = 1 + 0.005\sum_{k=1}^9 \sin(k\pi x),\quad \text{and} \quad 
v_0(x)  = 0.9 + 0.005\sum_{k=1}^9 \sin(k\pi x).
\end{align}
We solve the system using the finite difference method both in space and time.
The system of partial differential equations is solved on the mapped initial unit square domain using a finite difference scheme. Similar
solutions can be obtained by employing finite element methods for example or any other appropriate numerical
method. The algebraic linear systems are solved using a conjugate gradient method from
the module \textit{SciPy} \cite{jones01}.

The time-stepping scheme is based on  a modified implicit-explicit (IMEX) time-stepping scheme where we treat the diffusion part and any linear terms implicitly, and use a single Picard iterate to linearise nonlinear terms \cite{ruuth95,madzvamuse06,venkataraman12}. The method was analysed in \cite{lakkis13} for finite element discretisation. It is a first order, semi-implicit backward Euler scheme, given by
\begin{align}
\begin{split}
\frac{u^{n+1} - u^n}{\tau} + \frac{\dot{\rho}^n}{\rho^n} u^{n+1} & = \frac{1}{\rho^{2n}} \Delta_h u^{n+1} + \gamma (a - u^{n+1} + u^n
u^{n+1} v^n), \\
\frac{v^{n+1} - v^n}{\tau} + \frac{\dot{\rho}^n}{\rho^n} v^{n+1} & = \frac{d}{\rho^{2n}} \Delta_h u^{n+1} + \gamma (b - (u^n)^2
v^{n+1}), 
\end{split}
\end{align}
where $\Delta_h$ is the standard 3-point (or 5-point)  stencil finite difference
approximation of the Laplacian operator in 1-D (or 2-D) respectively, with Neumann boundary conditions.
The parameters of the solver are $h=10^{-2}$ and $\tau=10^{-4}$ for all the
computations.

We remark that our aim here is to illustrate the applicability of the Bayesian
approach and the Monte Carlo methods for parameter identification to problems emanating from experimental sciences. More sophisticated solvers can be used and will improve the computational time \cite{venkataraman13}. The synthetic data is generated by solving the reaction-diffusion system \eqref{eqn:schnakenberg} with reaction-kinetics \eqref{eqn:reactions} up to the final time $T=600$, and then construct the synthetic data by perturbing the solution with Gaussian noise with mean zero and
standard deviation equal to 5\% of the range of the solution  and this data is illustrated in Figure
\ref{fig:data_gturing}.

\begin{figure}[!h] 
\centering
\includegraphics[width=.90\textwidth]{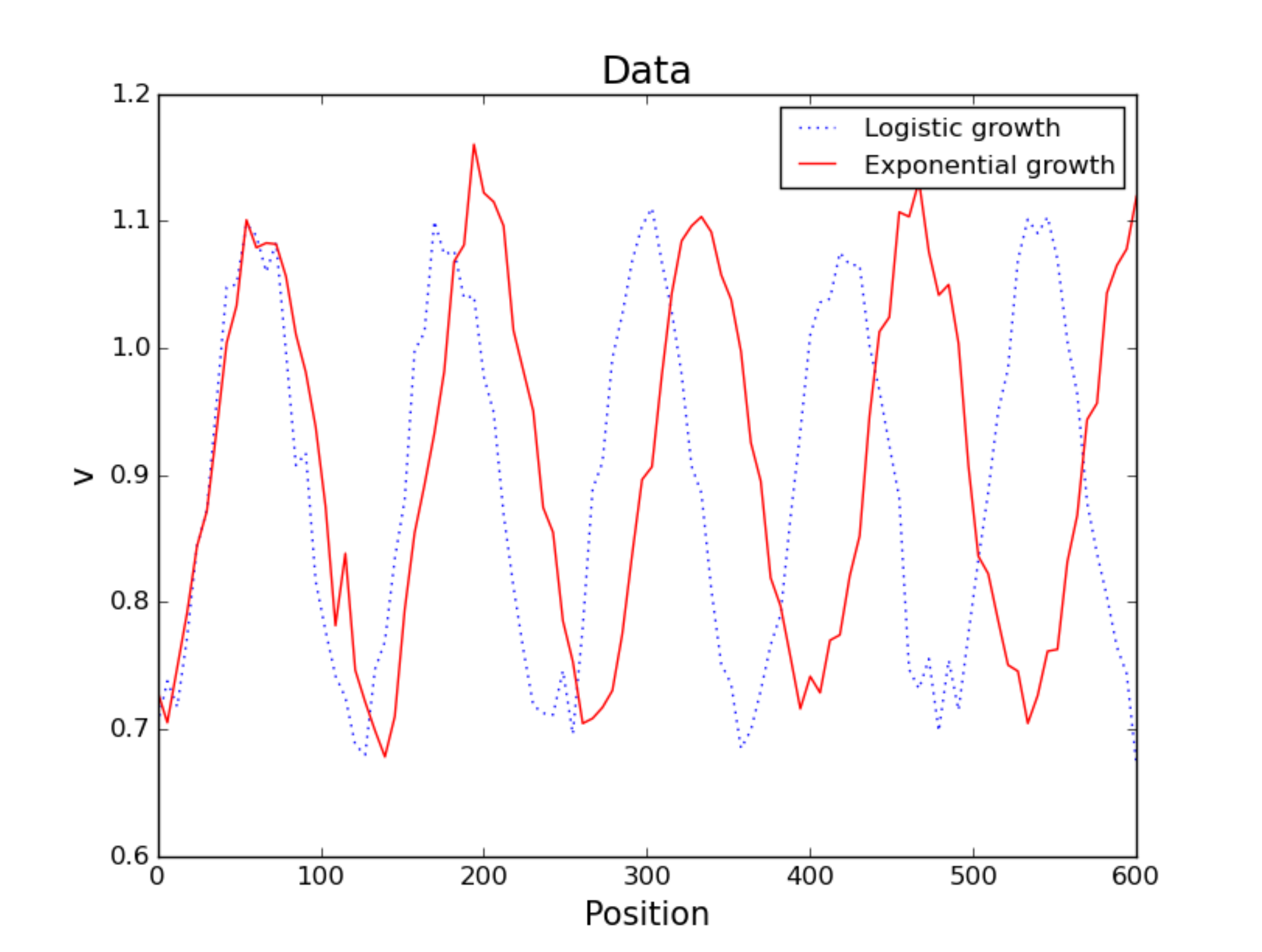}
\caption{Synthetic data for the reaction-diffusion system \eqref{eqn:schnakenberg} with reaction-kinetics \eqref{eqn:reactions} on a one-dimensional growing domain with exponential and logistic growth. The figure depicts only the $v$-component of the solution, the $u$ component is 180 degrees out of phase. (Colour version online).}
\label{fig:data_gturing}
\end{figure}

In order to identify a time-dependent parameter $\rho (t)$, we approximate it on a finite
dimensional space by using a polynomial of degree four, with only three degrees of freedom. The coefficients
of order zero and four are fixed in order to satisfy the conditions for $\rho (t)$
at times $t=0$ and $t=T$, respectively. The prior for the coefficients of the polynomial approximation of $\rho (t)$ is
selected by looking at the region of time-position space covered by $95\%$ of the samples of the
prior as shown in Figure \ref{fig:growth} (large shaded region, yellow in color
version). 
\begin{figure}[!h] 
\centering
\includegraphics[width=.90\textwidth]{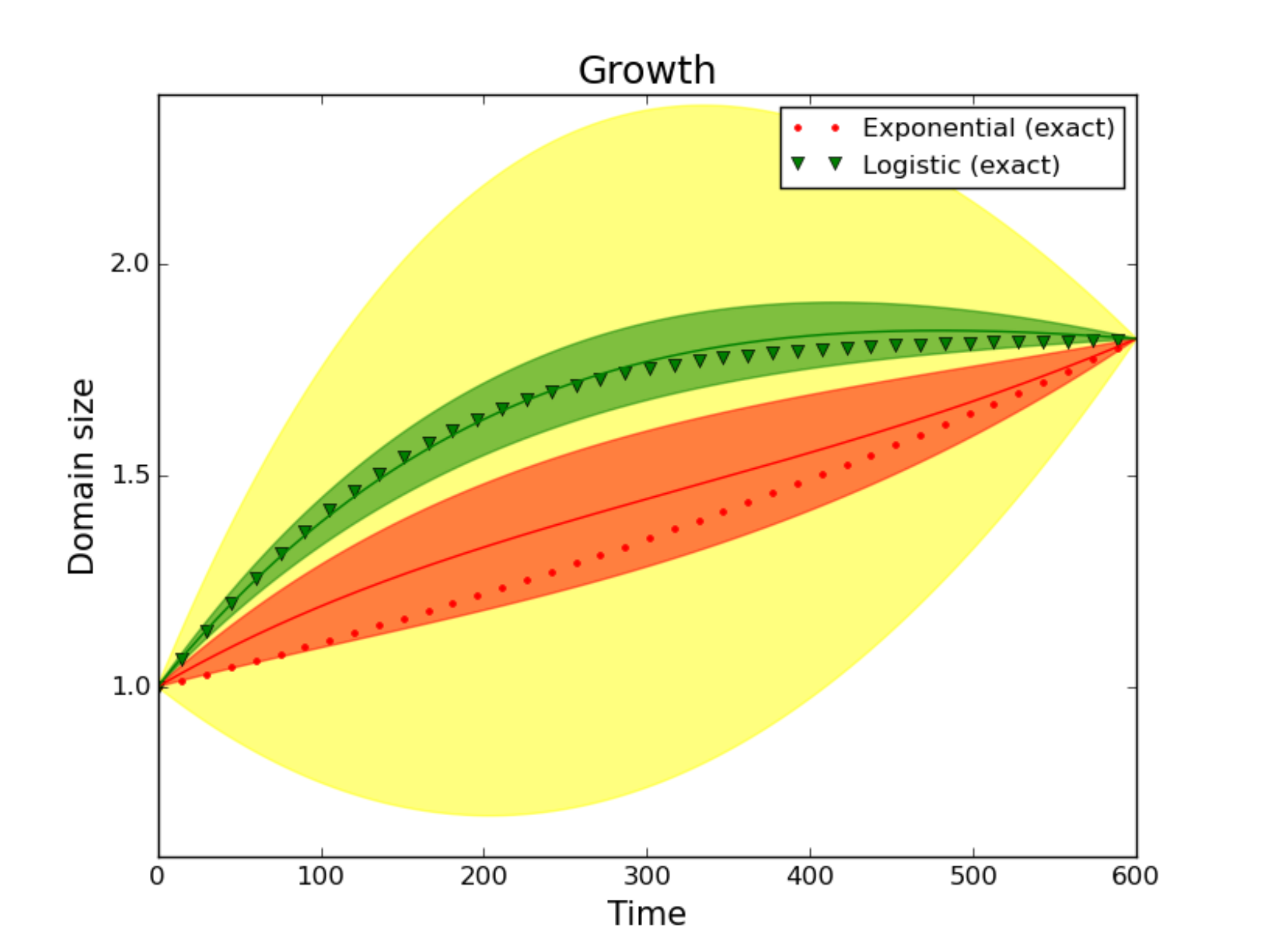}
\caption{Regions where $95\%$ of the samples of the prior (light colour, larger
region) and the posteriors for exponential growth (darker colour, bottom region)
and logistic growth (darker colour, upper region). The exact growth used to
generate the data is traced with triangles (logistic) and circles
(exponential). The solid lines are the growth rates computed using the mean of
each posterior distribution 
 (Colour version online).
}
\label{fig:growth}
\end{figure}

The same prior is used for both the logistic and
the exponential growth data sets. In Figure \ref{fig:growth} we depict the regions where 95\% of the samples from
the posteriors lie, and also the region where 95\% of the samples of the prior
lie. Observe that we could also plot the credible regions for the coefficients of
the finite dimensional approximation of $\rho$, but it is more difficult to
visualise the result from it. 

For the final time $T=600$, we find that the region corresponding to the
parameters identified from the exponential growth rate and the region
corresponding to the logistic growth rate
are completely separated: we can distinguish the type of growth from the
solutions at a final time $T$.

\subsubsection{Example 2: Finite dimensional parameter identification}
Next we demonstrate the applicability of our approach to identifying credible regions for parameters which are constant and not space nor time-dependent and this is a prevalent problem in parameter identification. We will again use the reaction-diffusion system \eqref{eqn:schnakenberg} with reaction-kinetics \eqref{eqn:reactions}  in the absence of domain growth, i.e. $\rho (t)=1$ for all time. Our model system is therefore posed on stationary domains, for the purpose of demonstration, we assume a unit-square domain. We seek to identify $a$, $b$, $d$ and $\gamma$.  For ease of exposition, we will seek  parameters in a pair-wise fashion. 
\begin{table}
\centering
\begin{tabular}{c c}
\hline
Parameter & Value\\
\hline
$a$ & 0.126779 \\
$b$ & 0.792366 \\
$d$ & 10 \\
$\gamma$ & 1000 \\
\hline
\end{tabular}
\caption{Exact values of the parameters, used to generate the noisy data shown
in Figure \ref{fig:data}.}
\label{tab:param}
\end{table}

Our ``experimental data'' are measurements of the steady state of the system. We fix the
initial conditions as a given perturbation of the spatially homogeneous steady
state given by $(a+b,\frac{b}{(a+b)^2})$. For the values of the parameters given
in Table \ref{tab:param}, the initial conditions are given by
\begin{align} \label{eqn:ic_numerical}
\begin{split} u_0(x,y) & = 0.919145 + 0.0016 \cos(2\pi(x+y)) + 0.01\sum_{j=1}^8
\cos(2\pi j x), \\ v_0(x,y)& = 0.937903 + 0.0016 \cos(2\pi(x+y)) + 0.01\sum_{j=1}^8
\cos(2\pi j x). \end{split}
\end{align}
We let the system evolve until the $L_2$ norm of the discrete time-derivative is smaller than a certain threshold. For the numerical experiments presented here, the threshold is $10^{-5}$.  At that point, we assume
that a spatially inhomogeneous steady state has been reached. We record the
final $T$ time and save the
solution. To confirm that the solution is indeed stationary, we keep solving the
system until time $t=2T$ and check that the difference between the solutions at
time $T$ and $2T$ is below the threshold. To generate our synthetic measurement, we add Gaussian noise to the
solution, as illustrated in Figure \ref{fig:data}.
\begin{figure}[!h] 
\centering
(a) \includegraphics[width=.6\textwidth]{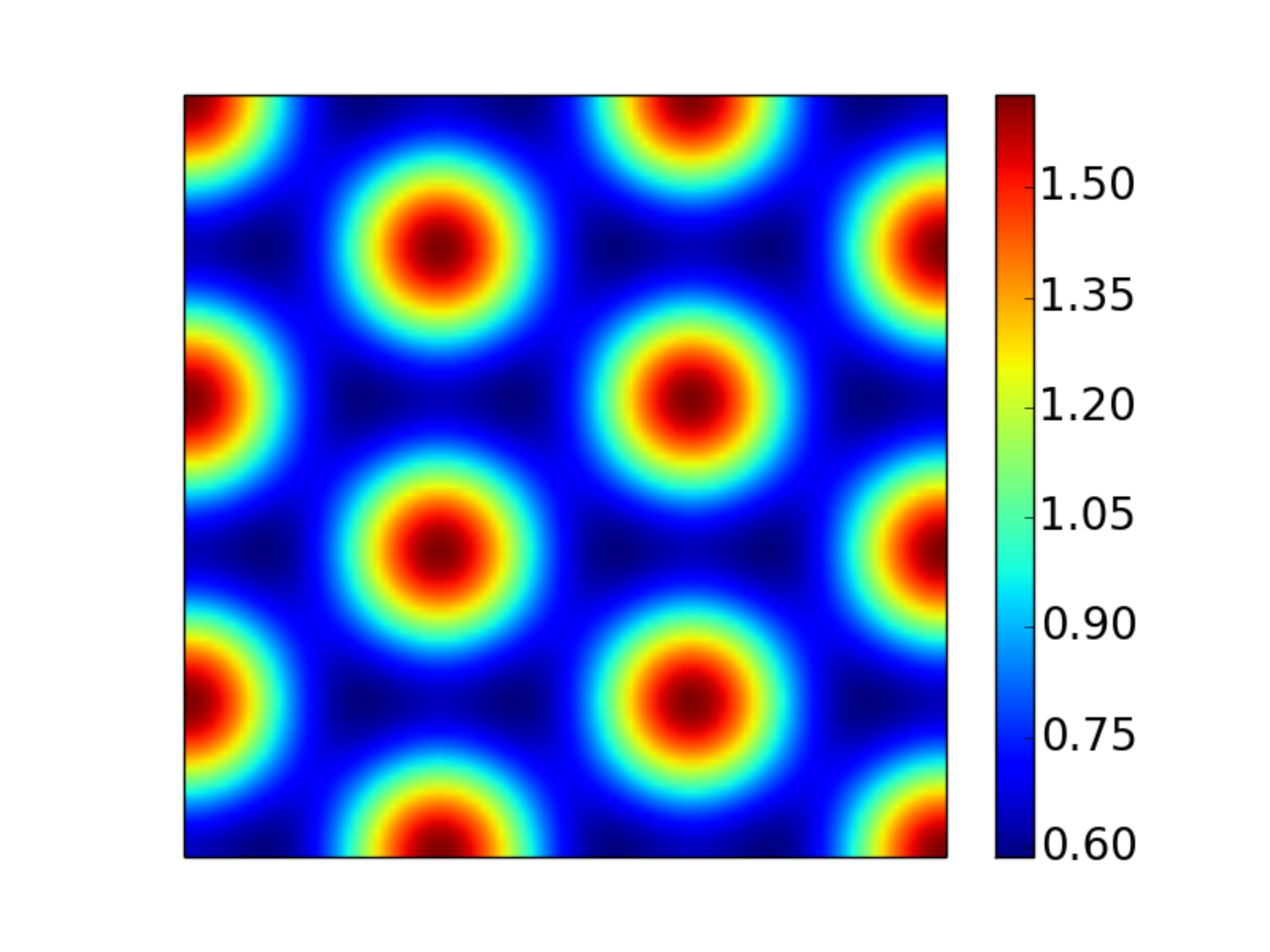}
(b)\includegraphics[width=.6\textwidth]{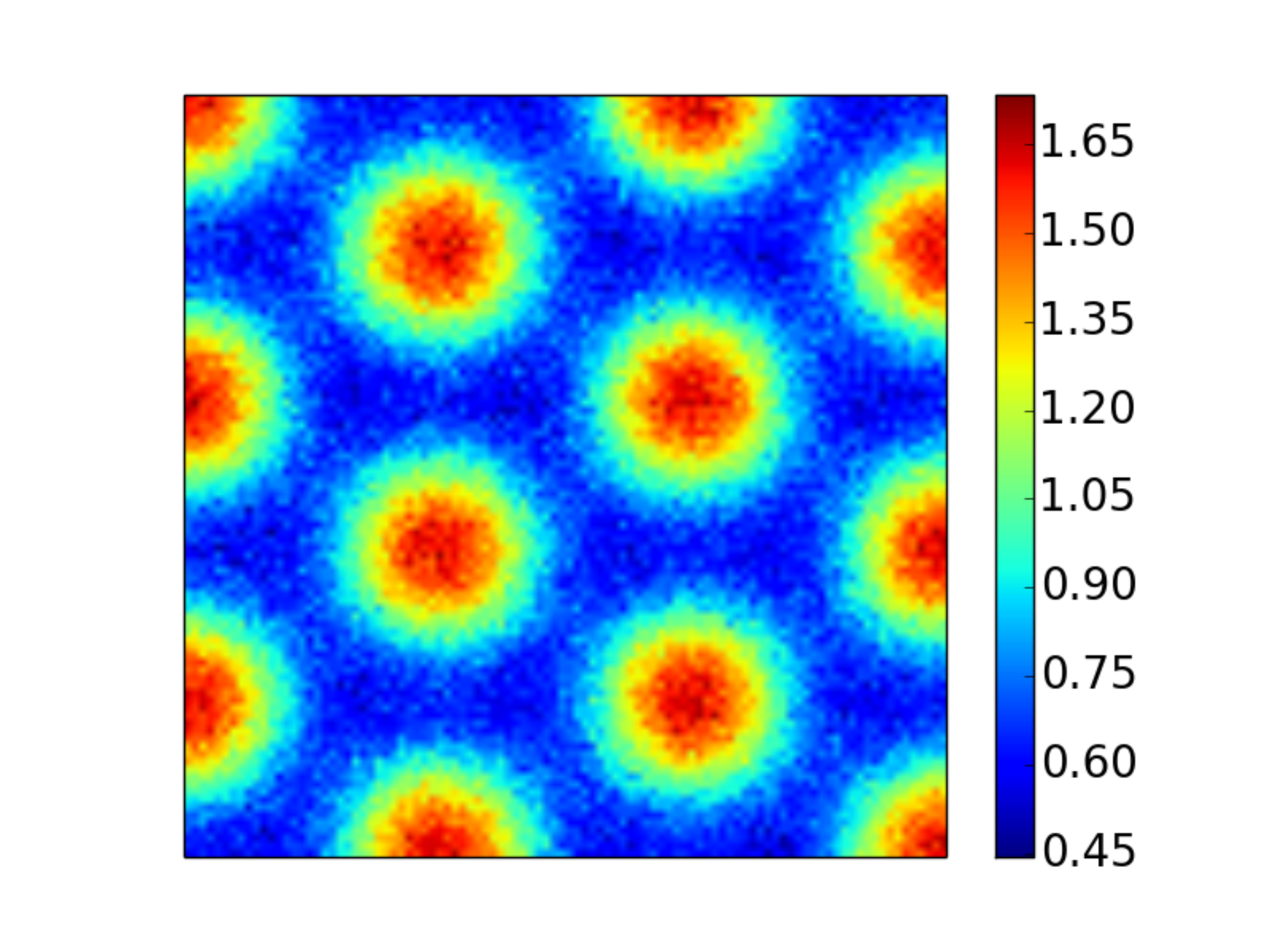}
(c)\includegraphics[width=.6\textwidth]{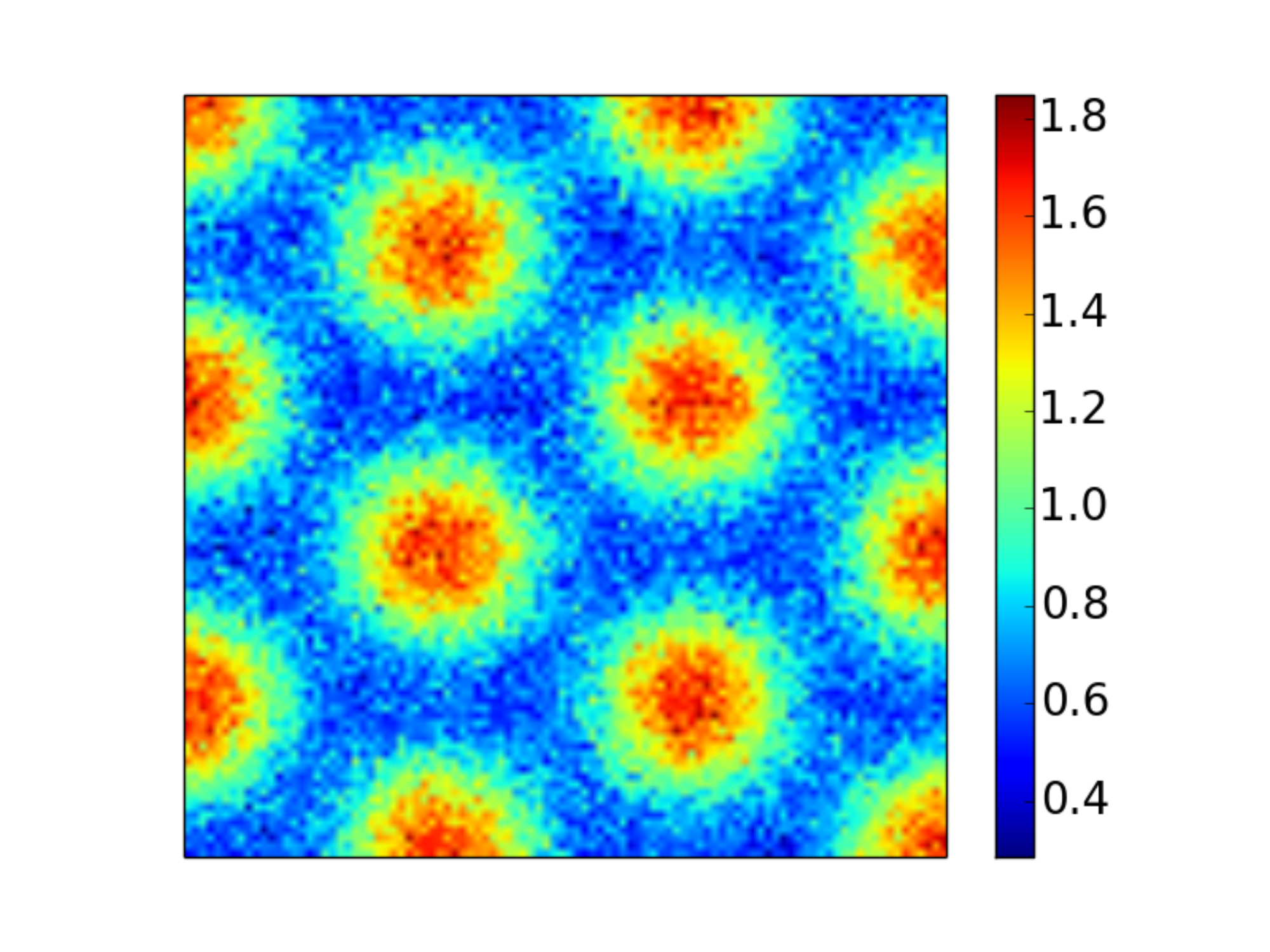}
\caption{The $u$-component of the solution to the Schnakenberg system \cite{gierer72,prigogine68,schnakenberg79}, with added
Gaussian noise with mean zero and standard deviation 5\% (b) and 10\% (c)
of the range of the solution. Solutions of the $v$-component are 180°
out-of-phase with those of $u$ and as such their plots are omitted (Colour version online).} 
\label{fig:data}
\end{figure}

This synthetic experiment is a situation similar to what one will face in an
actual experiment, although some assumptions, in particular the fixed known
initial conditions, are not realistic. A detailed study of the dependence of
the solution with respect to initial conditions will be necessary to drop this
assumption. Alternatively, the initial conditions could be included as a
parameter to identify.

\subsubsection*{Case 1: Credible regions for $a$ and $b$ with little knowledge}
For our first example, we assume that the values of the parameters
$\gamma$ and $d$ are known, and that we would like to find the values for the parameters
$a$ and $b$. In a first approach, we assume very little knowledge about $a$ and $b$:
only their order of magnitude is assumed to be known. This knowledge is
modelled by a uniform prior distribution on the region given by $[0.1,10]^2$.
The data for this example has Gaussian noise with standard deviation 5\% of the
range of the solution. We can see in Figure \ref{fig:uni5} a 95\% probability
region for the posterior distribution. Observe how this region is concentrated
around the exact value, in contrast to our original knowledge of a uniform
distribution on the region $[0.1,10]^2$. More precisely, the credible region
is contained within the range $[0.126,0.128]\times[0.789,0.796]$. The length of the credible
region in the $b$-direction is approximately 3.5 times larger than in the
$a$-direction, although the size relative to the magnitude of the parameters is
smaller in the $b$-direction.

\begin{figure}[!h] 
\centering
\includegraphics[width=.90\textwidth]{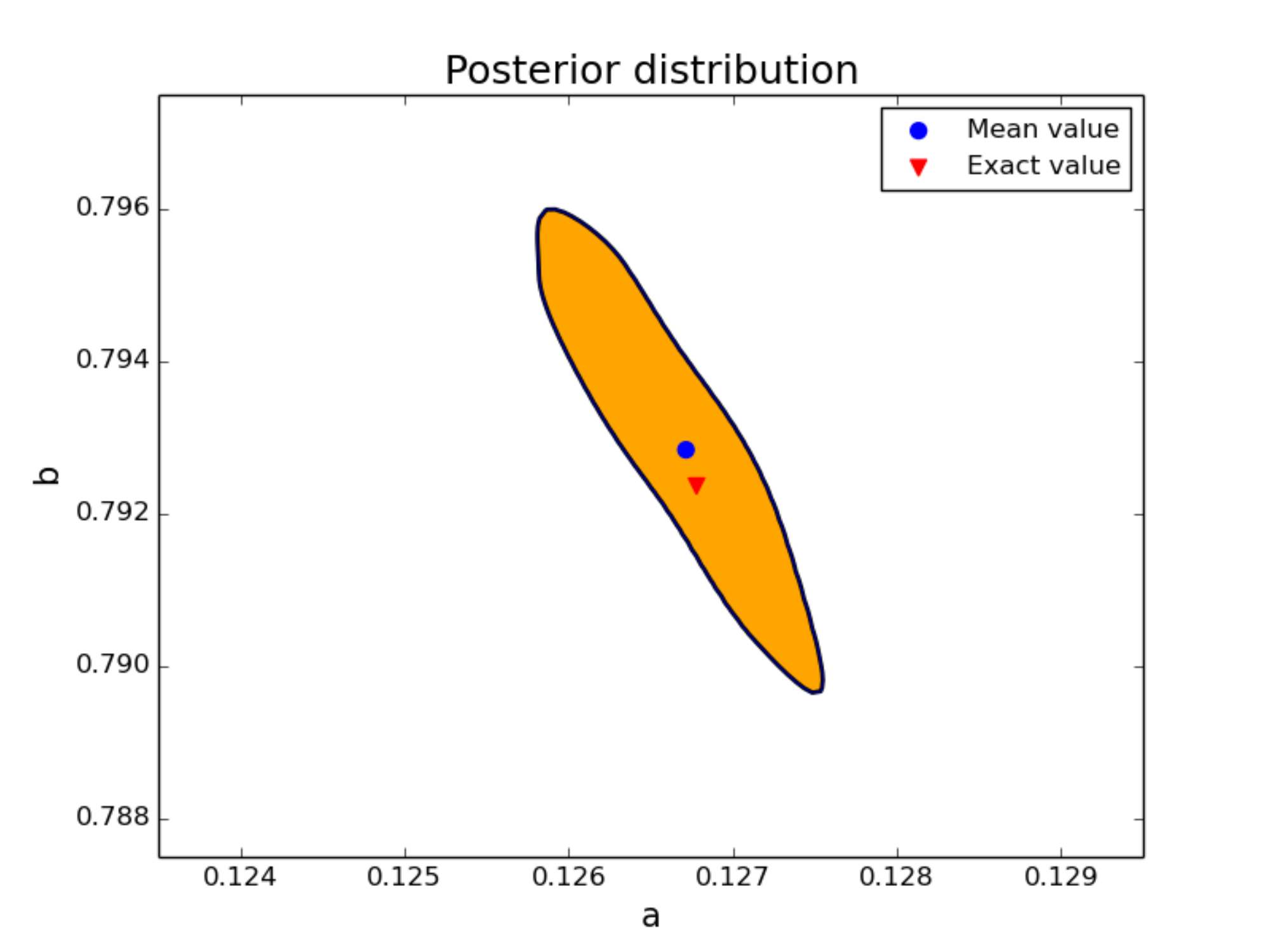}
\caption{95\% credible region for the posterior distribution for the
parameters $a$ and $b$, using a
uniform prior on the region $[0.1,10]^2$. Note that for ease of visualisation, scales for $a$ and $b$ are
different (Colour version online).}
\label{fig:uni5}
\end{figure}

For the Schnakenberg reaction kinetics \cite{gierer72,prigogine68,schnakenberg79}, it is possible to compute the region that contains
the parameters that can lead to non-homogeneous steady states, the Turing
space (see next example for details). We can see that our original prior covered an area much much larger than the
Turing space (almost 100 times larger), but the posterior is concentrated in a small region completely
contained within it (see Figure \ref{fig:turing}).

\begin{figure}[!h] 
\centering
\includegraphics[width=.90\textwidth]{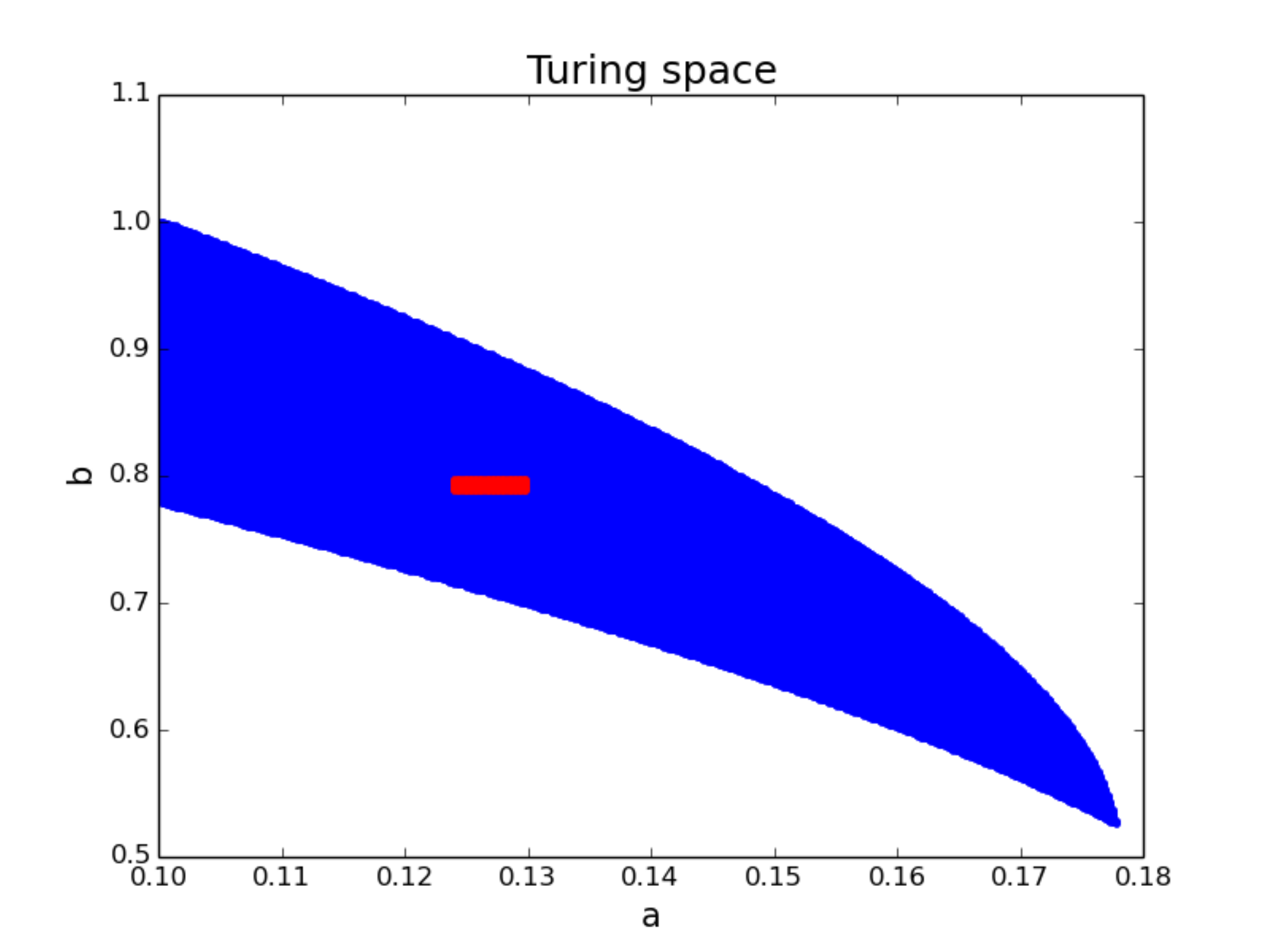}
\caption{Darker region (blue in the online version), the Turing space for the parameters $a$ and $b$ of the Schnakenberg
reaction kinetics \cite{gierer72,prigogine68,schnakenberg79}. Lighter region (red in the online version), the region plotted in Figure \ref{fig:uni5},
depicting where the credible region is contained (Colour version online).}
\label{fig:turing}
\end{figure}

\subsubsection*{Case 2: Credible regions for $a$ and $b$ using the Turing parameter space}
Unlike the previous example, where we assume little knowledge of the prior, here we exploit the well-known theory for reaction-diffusion systems on stationary domains and use a much-more informed prior based on analytical theory of the reaction-diffusion system. On stationary domains, diffusion-driven instability theory requires reaction-diffusion systems to be of the form of {\it long-range inhibition, short-range activation} for patterning  to occur (i.e. $d>1$). More precisely, a necessary condition for Turing pattern formation is that the parameters belong to a well-defined parameter space \cite{murray13} described by the inequalities (for the case of reaction-kinetics \eqref{eqn:reactions}) 
\begin{align}\label{eqn:turing}
\begin{split}
 f_u+g_v  & = \frac{b-a}{b+a} - (a+b)^2 < 0, \\
 f_u g_v - f_v g_u & = (a+b)^2 > 0, \\
 d f_u + g_v & = d\left( \frac{b-a}{b+a} \right) - (a+b)^2 > 0, \\
 (d f_u + g_v)^2 & - 4d(f_u g_v - f_v g_u) \\
&  = \left(d \left( \frac{b-a}{b+a} \right)  -
(a+b)^2 \right)^2 - 4d(a+b)^2 > 0,
\end{split}
\end{align} 
where $f_u$, $f_v$, $g_u$ and $g_v$ denote the partial
derivatives of $f$ and $g$ with respect to $u$ and $v$, evaluated at the spatially
homogeneous steady state $(a+b,\frac{b}{(a+b)^2})$. In Figure
\ref{fig:turing} we plot the parameter space obtained with Schnakenberg kinetics.

In this second example, we use data with added Gaussian noise with standard
deviation 10\% of the solution range. For the prior, we now use a uniform
distribution on the Turing space of the system. In Figure \ref{fig:tur10} we
can now see that despite the increased noise, the improved prior reduced the size
of the 95\% probability region dramatically.

\begin{figure}[!h] 
\centering
\includegraphics[width=.90\textwidth]{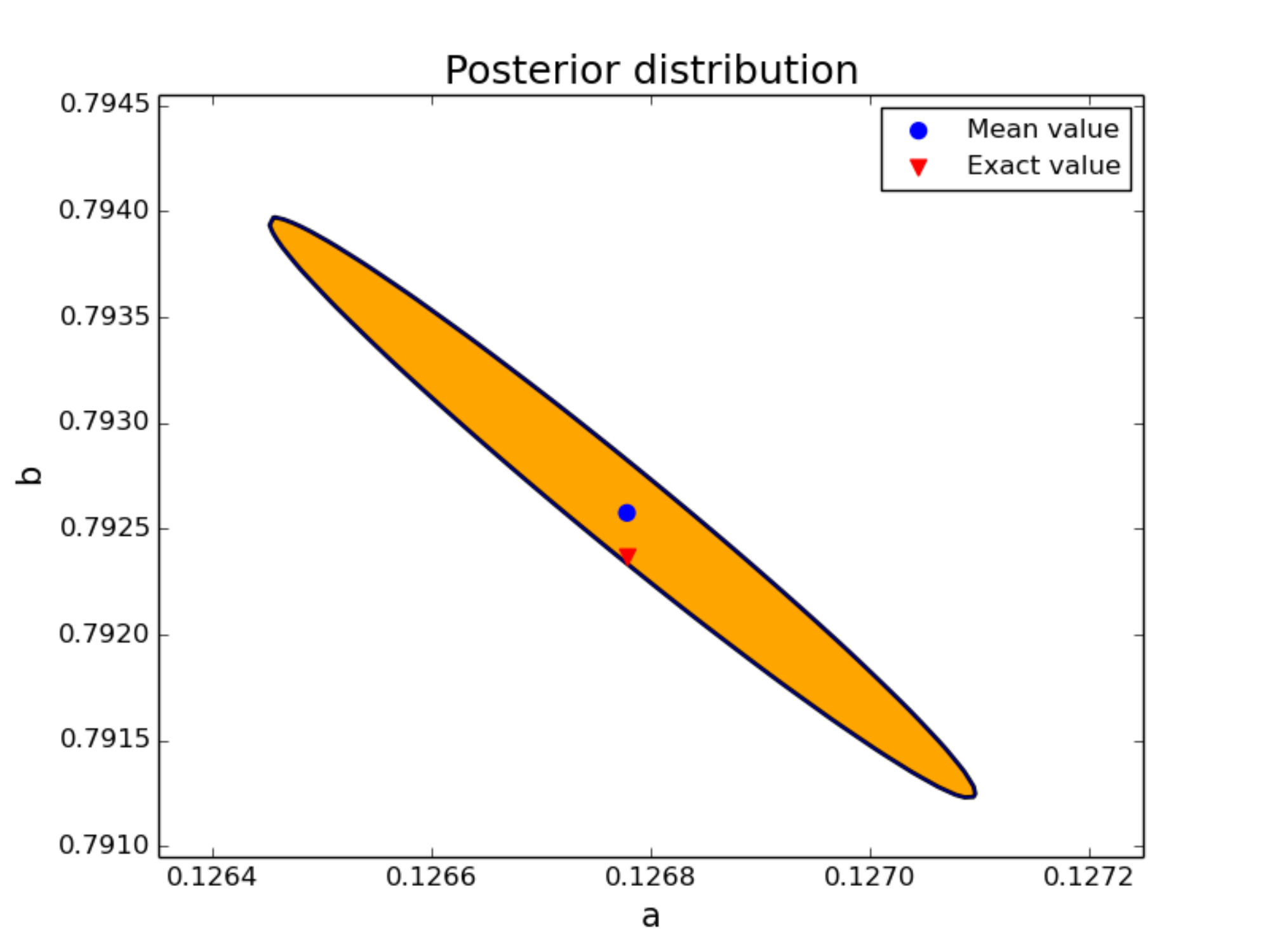}
\caption{95\% credible region for the posterior distribution for the
parameters $a$ and $b$, using a
uniform prior on the Turing space (Colour version online).}
\label{fig:tur10}
\end{figure}

\subsubsection*{Case 3: Credible regions for $d$ and $\gamma$}
In a third example, we assume that $a$ and $b$ are known, and we would like
to find $\gamma$ and $d$. To illustrate the use of different types of priors, here we
assume a log-normal prior that ensures positivity of $\gamma$ and $d$, which in
the case of $d$ is necessary to ensure a well-posed problem. We use the data
with 5\% noise, and the prior distribution of a log-normal with mean (5, 500)
and standard distribution 0.95. A 95\% probability region
of the posterior is depicted in Figure \ref{fig:dg5}. Note the use of a
log-normal or similar prior is essential here to ensure positivity of the
diffusion coefficient which is required for the well posedness of the forward
problem. The prior distribution covers a range of one order of magnitude for
each parameters, and the posterior distribution suggests relative errors of
order $10^{-3}$.

\begin{figure}[!h] 
\centering
\includegraphics[width=.90\textwidth]{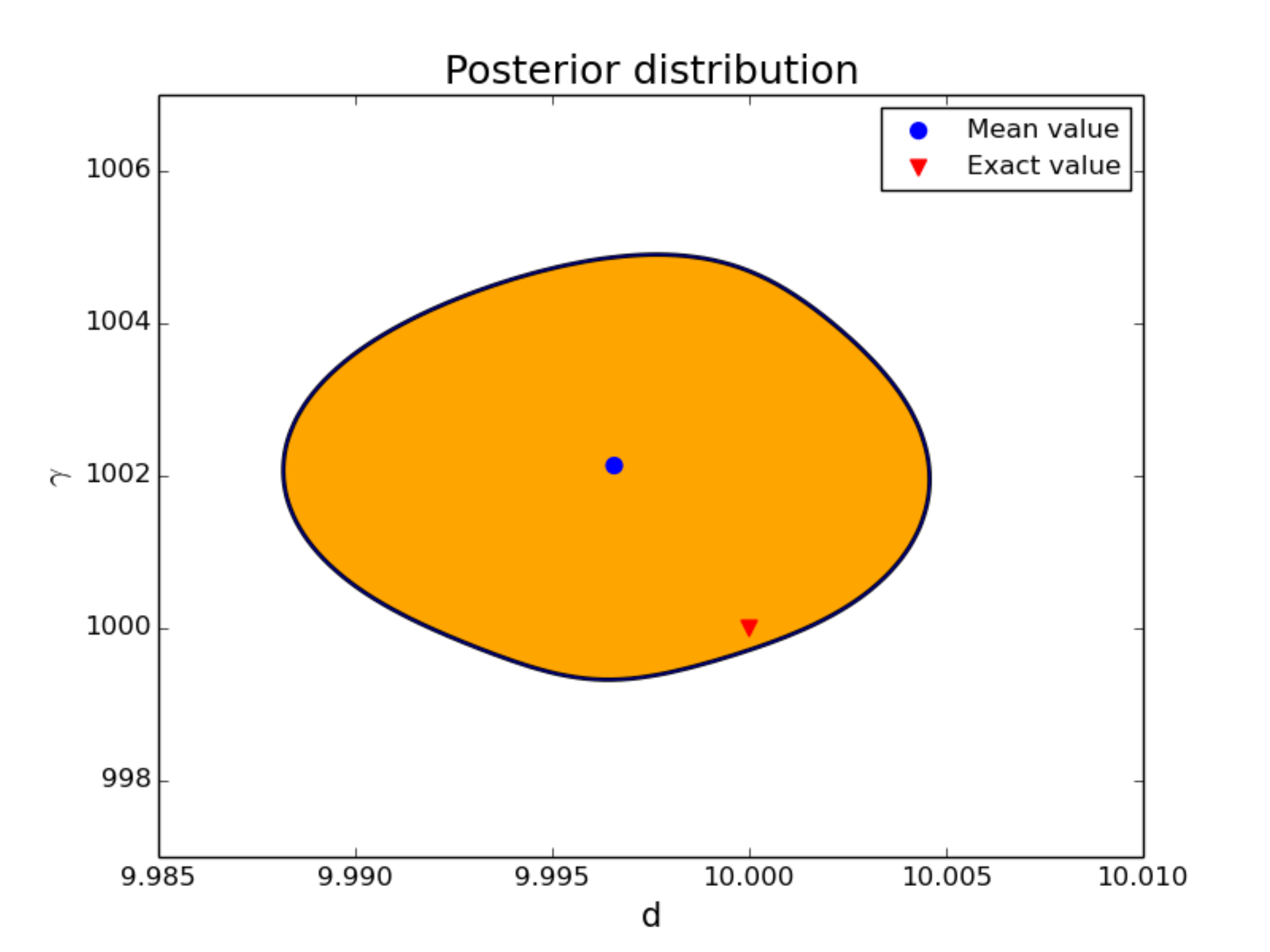}
\caption{95\% credible region for the posterior distribution for the
parameters $d$ and $\gamma$, a log-normal prior with mean (5,500) and standard
deviation 0.95 (Colour version online).}
\label{fig:dg5}
\end{figure}

\subsubsection*{Case 4: Credible regions for $a$, $b$, $d$ and $\gamma$}
Finally, we identify all four parameters $a$, $b$, $d$ and $\gamma$ from the set
of data with $10\%$ noise. We use the priors from {\bf Case 1} for $a$ and $b$,
and from {\bf Case 3} for $d$ and $\gamma$. We do not apply any further
restrictions.

In Figure \ref{fig:all10} we depict the  credible regions for the projection
of the parameters to four different coordinate planes. The noise for this
experiment is higher than in {\bf Case 1} and {\bf 3}. Also note that compared to
the previous experiments we assume here less knowledge {\it a priori} about the
parameters. It must be noted that assuming that a parameter is known is equivalent in the
Bayesian formulation to use a Dirac delta for the prior on the parameter; in
comparison, the prior distributions for {\bf Case 4} are more spread, therefore
they represent less knowledge. Since the level of noise was higher, and the prior knowledge lower, the credible
regions that we obtain are larger.

\begin{figure}[!h] 
\centering
\includegraphics[width=.90\textwidth]{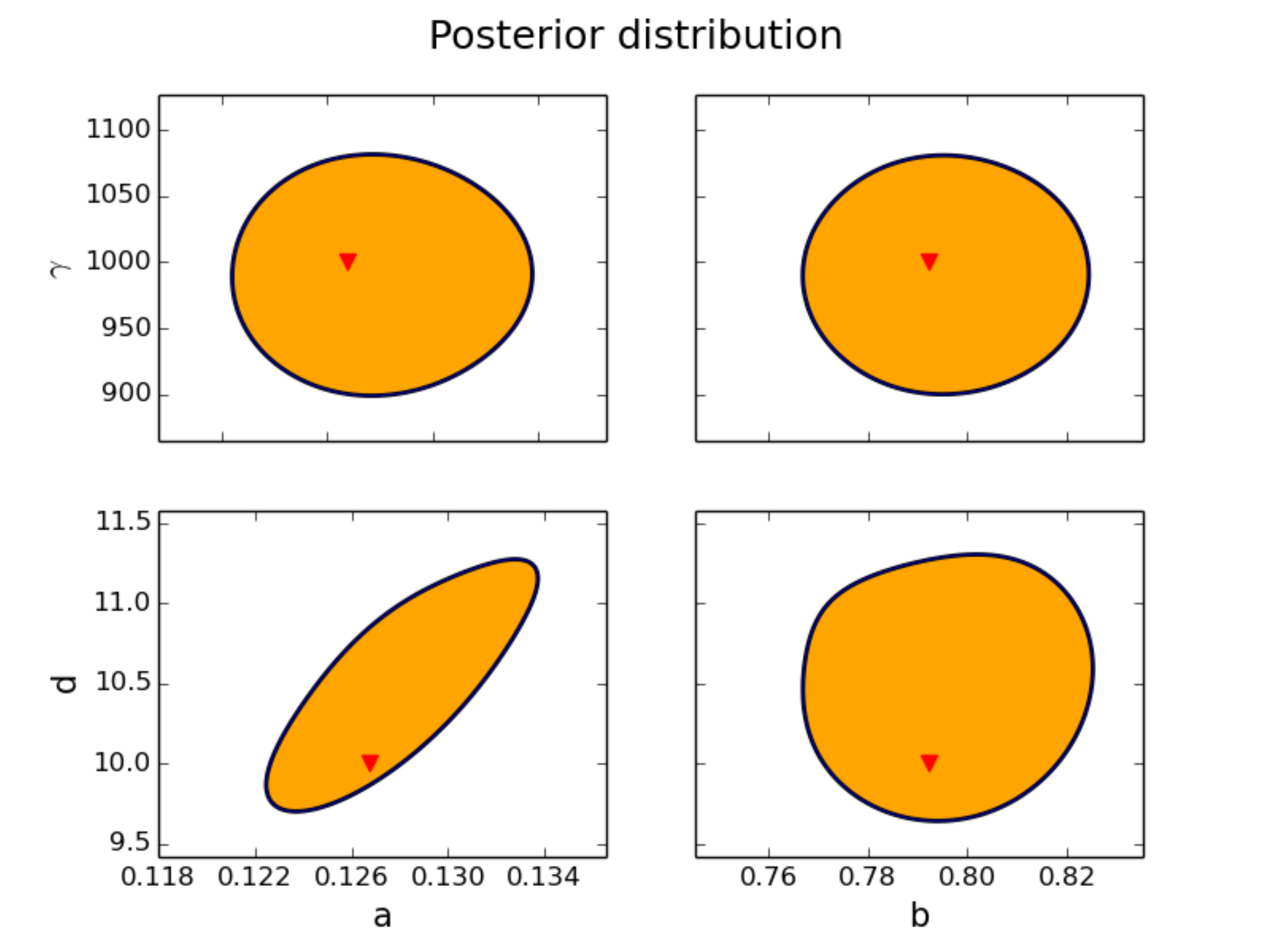}
\caption{95\% credible region for the posterior distribution for the
parameters $a$, $b$, $d$ and $\gamma$. See {\bf Case 1} and {\bf 3} for a
description of the priors. The data has a noise of $10\%$ of the range of the
solution. Each subplot corresponds to the projection of the credible region
onto a coordinate plane for two of the parameters, given by the row and the
column of the subplot. The exact value of the parameters is marked with a triangle.(Colour version online).}
\label{fig:all10}
\end{figure}

\section{Conclusion}
We have presented a methodology for parameter identification based on Bayesian
techniques. The Bayesian framework offers a mathematically rigorous approach
that includes the prior knowledge about the parameters (including functions). Furthermore,
well-posedness results for the problem are available.

Although exploring a whole probability distribution can be computationally
expensive, very useful information about the uncertainty or correlation of the
parameters can be inferred from it. The use of a parallel Metropolis-Hastings
algorithm makes the computations feasible using HPC facilities.

To illustrate the applicability of this technique, we studied the parameter
identification problem for reaction-diffusion systems using Turing patterns as
data for both scenarios: infinite and finite dimensional parameter identification cases. 
We provided a rigorous proof for the well-posedness of the parameter identification problem in the
Bayesian sense, and we performed several numerical simulations to find credible
regions for the parameters.

The mathematical machinery behind the Bayesian framework can be more involved
than in the optimal control approach. For the latter, the general idea is
intuitive: find the parameter that minimises the distance between the data and
the solution. On the other hand, the Bayesian approach requires to understand
the knowledge about a parameter as a probability distribution, to formulate the
problems in terms of conditional probabilities, and then to use Bayes' theorem
to characterise the solution.

An advantage of the Bayesian approach is that it allows us to understand all the
elements of the mathematical formulation of the parameter identification
problem in terms of aspects of the real-world experiment. Whilst in the optimal
control approach one chooses the type of distance, or the norm, and the
regularisation, according to technical limitations of the mathematical methods
(see for instance \cite{blazakis15,portet15,yang15}). In the Bayesian approach these choices are given by the distribution of the
experimental error and the prior knowledge about the parameter. Therefore,
following the Bayesian methodology we can incorporate in the parameter
identification problem more detailed characteristics of the noise and the prior
knowledge, leading to estimations of the parameters more consistent with the
experimental framework. Note that in terms of the optimal control formulation
of the parameter identification problem, a different noise means a different
distance to measure the fit between the data and the model. In general,
different distances will be minimised by different values of the parameters.

\section{Outlook}
In this work, we have treated the synthetic data parameter identification
problem using only information that would be available if the data came from an
actual experiment. We are now in the process of applying this techniques to
problems in experimental sciences and beyond where experimental data is available. The
information provided by the posterior distribution allows us to solve problems
like model selection incorporating all the knowledge from the experiments.

\section{Data accessibility}
All the computational data output is included in the present manuscript.

\section{Competing interests}
We have no competing interests.

\section{Authors contributions}
ECF carried out the bulk of the analysis, algorithm development and
implementation as well as drafting the manuscript as part of his PhD thesis. CV
and AM conceived the study, coordinated, supervised the research study and
contributed equally to the writing up of the manuscript. All authors gave final
approval for publication.

\section{Funding}
The authors (ECF, CV, AM) acknowledge support from the Leverhulme Trust
Research Project Grant (RPG-2014-149).  This work (CV, AM) was partially
supported by the EPSRC grant (EP/J016780/1). The authors (ECF, CV, AM) 
would like to thank the Isaac Newton Institute for Mathematical Sciences 
for its hospitality during the programme [Coupling Geometric PDEs with Physics for Cell Morphology, 
Motility and Pattern Formation] supported by EPSRC Grant Number EP/K032208/1.  
This work (AM) has received funding from the European Union Horizon
2020 research and innovation programme under the Marie Sklodowska-Curie grant
agreement (No 642866). AM was partially supported by a grant from the Simons
Foundation.

\section{Acknowledgements}
The authors acknowledge the support from the University of Sussex ITS for
computational purposes. 

\clearpage
\newpage
\bibliography{library_bibtex}{}
\bibliographystyle{abbrv}

\end{document}